
\documentclass[a4paper,9pt]{article}

\pdfoutput=1
\usepackage{bbm}
\usepackage{mathtools}
\usepackage{amsthm}
\usepackage{amsfonts}
\usepackage{booktabs}
\usepackage{multirow}
\usepackage{bm}
\usepackage{enumerate}
\usepackage{url}
\usepackage{enumitem}
\usepackage{vmargin}
\setmarginsrb{60pt}{100pt}{60pt}{60pt}{0pt}{0pt}{30pt}{30pt}


\newcommand{\fooC}{\textnormal{(C1)}}

\DeclareMathOperator*{\argmax}{argmax}

\newtheorem{definition}{Definition}
\newtheorem{theorem}{Theorem}
\newtheorem{corollary}{Corollary}
\newtheorem{proof-th}{Proof}

\usepackage{authblk}
\title{SmoothI: Smooth Rank Indicators for Differentiable IR Metrics}
\author[1]{Thibaut Thonet\thanks{Now at NaverLabs Europe.}}
\author[2]{Yagmur Gizem Cinar\thanks{Now at Amazon Development Centre Scotland.}}
\author[3]{Eric Gaussier}
\author[4]{Minghan Li}
\author[5]{Jean-Michel Renders}
\affil[1]{Univ. Grenoble Alpes, CNRS, Grenoble INP - thibaut.thonet@naverlabs.com}
\affil[2]{Univ. Grenoble Alpes, CNRS, Grenoble INP - yg.cinar@gmail.com}
\affil[3]{Univ. Grenoble Alpes, CNRS, Grenoble INP - eric.gaussier@univ-grenoble-alpes.fr}
\affil[4]{Univ. Grenoble Alpes, CNRS, Grenoble INP - minghan.li@univ-grenoble-alpes.fr}
\affil[1]{NaverLabs Europe - jean-michel.renders@naverlabs.com}
\date{}

\begin{document}

\maketitle

\begin{abstract}
Information retrieval (IR) systems traditionally aim to maximize metrics built on rankings, such as precision or NDCG. However, the non-differentiability of the ranking operation prevents direct optimization of such metrics in state-of-the-art neural IR models, which rely entirely on the ability to compute meaningful gradients. To address this shortcoming, we propose SmoothI, a smooth approximation of rank indicators that serves as a basic building block to devise differentiable approximations of IR metrics. We further provide theoretical guarantees on SmoothI and derived approximations, showing in particular that the approximation errors decrease exponentially with an inverse temperature-like hyperparameter that controls the quality of the approximations. Extensive experiments conducted on four standard learning-to-rank datasets validate the efficacy of the listwise losses based on SmoothI, in comparison to previously proposed ones. Additional experiments with a vanilla BERT ranking model on a text-based IR task also confirm the benefits of our listwise approach.
\end{abstract}

\section{Introduction}
\label{sec:intro}

Learning to rank \cite{Liu2011} is a sub-field of machine learning and information retrieval (IR) that aims at learning, from some training data, functions able to rank a set of objects~-- typically a set of documents for a given query. Learning to rank is currently one of the privileged approaches to build IR systems. This said, one important problem faced with learning to rank is that the metrics considered to evaluate the quality of a system, and the losses they underlie, are usually not differentiable. This is typically the case in IR: popular IR metrics such as precision at $K$, mean average precision or normalized discounted cumulative gain, are neither continuous nor differentiable. As such, state-of-the-art optimization techniques, such as stochastic gradient descent, cannot be used to learn systems that optimize their values.

To address this problem, researchers have followed two main paths. The first one consists in replacing the loss associated with a given metric by a \textit{surrogate loss} which is easier to optimize. Such an approach is studied in \cite{BruchCorr2019,Cao2007LRP,Kar2015,Qin2008,Ravikumar2011,Valizadegan2009,Xia2008}, for example. A surrogate loss typically upper bounds the true loss and, if consistent, asymptotically (usually when the number of samples tends to infinity) behaves like it. The second solution is to identify \textit{differentiable approximations} of the metrics considered. This approach was adopted in, \textit{e.g.}, \cite{Qin2010,Revaud2019,Taylor2008,Wu2009}. Typically, such approximations converge towards the true metrics when an hyper-parameter that controls the quality of the approximation tends to a given value. Both approaches define optimization problems that approximate the original problem, and both have advantages and disadvantages. One of the main advantages of surrogates losses lies in the fact that it is sometimes possible to rely on an optimization problem that is convex and thus relatively simple to solve. However, Calauzènes et al. \cite{calauzenes2012,calauzenes2020} have shown that convex and consistent surrogate ranking losses do not always exist, as for example for the mean average precision or the expected reciprocal rank. Furthermore, as pointed out in \cite{Bruch2019}, ``surrogate losses in learning to rank are often loosely related to the target loss or upper-bound a ranking utility function instead''. On the other hand, one of the main advantages in using a differentiable approximation of a metric is the fact that one directly approximates the true loss, the quality of the approximation being controlled by an hyperparameter and not the number of samples considered. Although the optimization problem obtained is in general non-convex and its solution usually corresponds to a local optimum, the recent success of deep learning shows that solving non-convex optimization problems can nonetheless lead to state-of-the-art systems.

We follow here this latter path and study differentiable approximations of standard IR metrics. To do so, we focus on one ingredient at the core of these metrics (as well as other ranking metrics), namely the rank indicator function. We show how one can define high-quality, differentiable approximations of the rank indicator and how these lead to good approximations of the losses associated with standard IR metrics. Our contributions are thus three-fold:
\begin{itemize}
    \item We introduce SmoothI, a novel differentiable approximation of the rank indicator function that can be used in a variety of ranking metrics and losses.
    \item We furthermore show that this approximation, as well as the differentiable IR metrics and losses derived from it, converge towards their true counterpart with theoretical guarantees. As such, our proposal complements existing ones and extends the set of tools available for differentiable approaches to ranking.
    \item Lastly, we empirically illustrate the behavior of our proposal on both learning to rank features and standard, text-based features.
\end{itemize}
To foster reproducibility, we publicly release our source code.\footnote{\url{https://github.com/ygcinar/SmoothI}} The remainder of the paper is organized as follows. Section~\ref{sec:functions} provides the background of our study. Section~\ref{sec:smoothI} introduces the differentiable approximation of the rank indicator we propose and Section~\ref{sec:ir-metrics} describes how to use it with standard IR metrics. We study the empirical behavior of this proposal in Section~\ref{sec:IR} and discuss the related work in Section~\ref{sec:relwork}. Finally, Section~\ref{sec:conclusion} concludes the paper.

\section{Preliminaries}
\label{sec:functions}

For a given query, an IR system returns a list of ranked documents. The ranking is based on scores provided by the IR system~-- scores that we assume here to be \textit{strictly positive and distinct}\footnote{This is not a restriction \textit{per se} as one can add an arbitrary large value to the scores without changing their ranking, and ties can be broken randomly.} and that will be denoted by $\mathcal{S}=\{S_1, \ldots, S_N\}$ for a list of $N$ documents. To assess the validity of an IR system, one uses gold standard collections in which the true relevance scores of documents are known, and IR metrics that assess to which extent the IR system is able to place documents with higher relevance scores at the top of the ranked list it returns. The most popular metrics are certainly the precision at $K$ (denoted by P@$K$) which measures the precision in the list of top-$K$ documents, its extension Mean Average Precision\footnote{MAP has a strong dependence on recall \cite{Fuhr2017}, and tend to be less used in IR evaluation.} (MAP), as well the Normalized Discounted Cumulative Gain at $K$ (NDCG@$K$) which can take into account graded relevance judgements. 

P@$K$ is the average over queries of P@$K_q$, defined for a given query $q$ by:
\begin{equation}\label{eq:p@k}
\mbox{P@}K_q = \frac{1}{K} \sum_{r=1}^K rel_q(j_r),
\end{equation}
where $j_r$ is the $r^{th}$ highest document in the list of scores $\mathcal{S}$ (\textit{i.e.}, the document with the $r^{th}$ largest score in $\mathcal{S}$), $rel_q(j)$ is a binary relevance score that is $1$ if document $j$ is relevant to $q$ and $0$ otherwise.
MAP is the average over queries of AP$_{q}$ defined by:
\begin{equation}\label{eq:ap}
\mbox{AP}_q = \frac{1}{\sum^N_{j=1} rel_q(j)} \sum_{K=1}^N rel_q(j_K) P@K_q,
\end{equation}
The normalized discounted cumulative gain at rank $K$, NDCG@$K$, is  the average over queries of NDCG@$K_q$, defined for a given query $q$ by:
\begin{equation}\label{eq:ndcg@k}
\mbox{NDCG@}K_q = \frac{1}{N_K^q} \sum_{r=1}^K \frac{2^{rel_q(j_r)} - 1}{\log_2 (k+1)},
\end{equation}
where $rel_q(j)$ is now a (not necessarily binary) positive, bounded relevance score for document $j$ with respect to query $q$ (higher values correspond to higher relevance) and $N_K^q$ a query-dependent normalizing constant. The standard NDCG metric corresponds to NDCG@$N$ \cite{Jarvelin2002}.

We are interested here in differentiable approximations of these metrics so as to rely on state-of-the-art machine learning methods to develop IR systems optimized for P@$K$, NDCG@$K$ and their extensions. The approximations we will consider are parameterized by an inverse temperature-like\footnote{This is a standard approach for continuous approximations; see, \textit{e.g.}, \cite{Rose1990}.} hyperparameter, called $\alpha$, that controls the quality of the approximation~-- the approximation being more accurate when $\alpha$ tends to $\infty$ and more smooth when $\alpha$ approaches 0. The following definition formalizes the notion of differentiable approximations in our context.
\begin{definition}\label{def:diffapprox} Consider a function $f : \mathcal{S} \mapsto f(\mathcal{S})$ where $\mathcal{S} = (S_1, \ldots, S_N) \in \mathbb{R}^N$. A function $f^{\alpha}$ defined on $\mathbb{R}^N$ is said to be a differentiable approximation of the function $f$ iff $f^{\alpha}$ is differentiable wrt any $S_i$, $i \in \{1, \ldots, N\}$, and $lim_{\alpha \rightarrow \infty} f^{\alpha}(\mathcal{S}) = f(\mathcal{S})$ for all $\mathcal{S}$.
\end{definition}
%
As the reader may have noticed, the common building block and main ingredient of the above IR metrics (Eqs.~\ref{eq:p@k}, \ref{eq:ap}, \ref{eq:ndcg@k}) is  the relevance score of the document at any rank $r$, namely $rel_q(j_r)$. If one can define a "good" differentiable approximation of $rel_q(j_r)$, then one disposes of a "good" differentiable approximation of IR metrics. The goal of this paper is to introduce such differentiable approximations, while giving "good" a precise meaning.

It is easy to see that, in a list of $N$ documents, for $1 \le r \le N$, one has:
\begin{equation}\label{eq:approx-rel}
rel_q(j_r) = \sum_{j=1}^N rel_q(j) I_j^r,
\end{equation}
where $I_j^r$ is the indicator function at rank $r$ defined by: 
\begin{align}
I^r_j = \left\{
\begin{aligned}
 & 1 & \mbox{if $j$ is the $r^{th}$ highest document in the list,} \nonumber \\
 & 0 & \mbox{otherwise.} \nonumber
\end{aligned}
\right.
\end{align}
Thus, one can obtain differentiable approximations of IR metrics from a differentiable approximation of the rank indicators. We propose such an approximation in the following section.

\section{Smooth Rank Indicators}
\label{sec:smoothI}

Before generalizing our approximation to any rank $r$, let us first review the top-1 case, i.e., where one only seeks the document with the largest score in $\mathcal{S}=\{S_1, \ldots, S_N\}$. In this case, the true rank indicator can be expressed using the $\argmax$ operator:
\begin{align}
\label{eq:top1}
I^1_j = \left\{
\begin{aligned}
 & 1 & \mbox{if $j = \displaystyle\argmax_{j'\in\{1, \ldots, N\}} S_{j'}$,} \\
 & 0 & \mbox{otherwise.}
\end{aligned}
\right.
\end{align}
A widespread smooth approximation of the $\argmax$ is the parameterized softmax. This latter has been employed in, \textit{e.g.}, \cite{MoradiFard2018} in the context of deep $k$-means clustering, in \cite{Plotz2018} in the context of neural nearest neighbor networks, as well as in \cite{Jang2017,Maddison2017} within a Gumbel-softmax distribution employed to approximate categorical samples. We then easily obtain the smooth rank indicator for rank 1 as follows:
\begin{equation*}
I^{1,\alpha}_j = \frac{e^{\alpha S_j}}{\sum_{j'} e^{\alpha S_{j'}}},
\end{equation*}
which behaves, when $\alpha \rightarrow +\infty$, as the true indicator function $I^1_j$.

We now wish to generalize the true rank indicator formulation given in Eq.~\ref{eq:top1} to any rank $r \geq 1$. This can be achieved by introducing a recursive dependency between $I^r_j$ and $\{I^l_j\}_{l=1}^{r-1}$:
\begin{align*}
I^r_j = \left\{
\begin{aligned}
 & 1 & \mbox{if $j = \displaystyle\argmax_{\left\{\substack{j'\in\{1, \ldots, N\}\\ \forall l < r, I_{j'}^l = 0}\right.} S_{j'}$,} \\
 & 0 & \mbox{otherwise.} 
\end{aligned}
\right.
\end{align*}
The constraint $\forall l < r, I_{j'}^l = 0$ ensures that the $(r-1)$ highest documents are ignored and not repeatedly selected by the $\argmax$. Given the non-negativity assumption on the scores, the previous formulation can be equivalently expressed by integrating the constraint $\forall l < r, I_{j'}^l = 0$ in the objective as follows:
\begin{align}
\label{eq:topr}
I^r_j = \left\{
\begin{aligned}
 & 1 & \mbox{if $j = {\displaystyle\argmax_{j'\in\{1, \ldots, N\}}} \, S_{j'} \prod_{l=1}^{r-1} (1 - I_{j'}^l)$,} \\
 & 0 & \mbox{otherwise.} 
\end{aligned}
\right.
\end{align}

\subsection{SmoothI: Generalization to Rank $r$}

\begin{figure}[t] 
\centering
\includegraphics[width=0.8\textwidth]{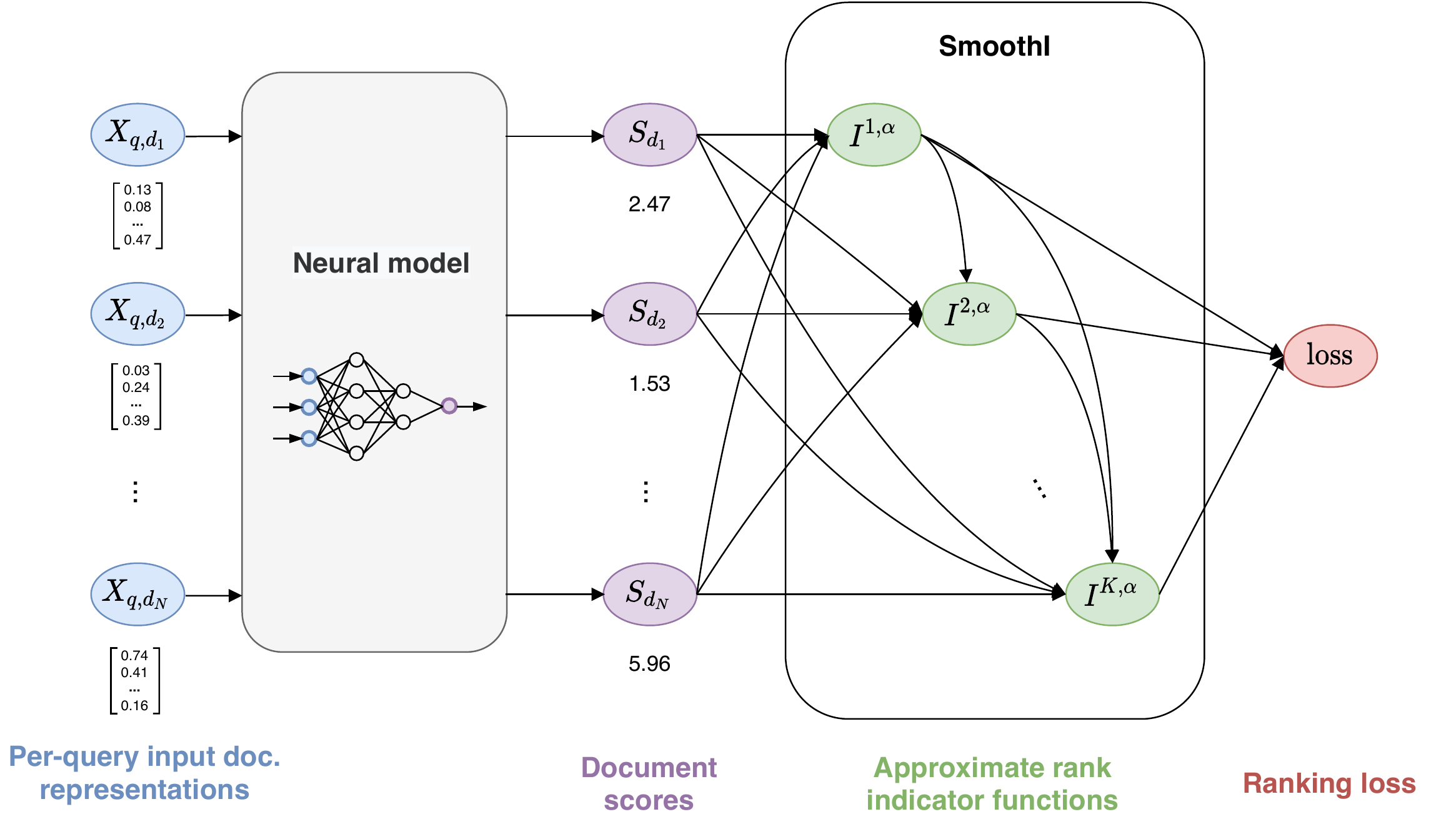}
\caption{Illustration of SmoothI and its positioning in a neural retrieval system. Given a query $q$, the document representations $\{X_{q,d_i}\}_{i=1}^N$ are first passed through a neural model which outputs a set of scores $\{S_{d_i}\}_{i=1}^N$. The scores are then processed by the SmoothI module, yielding smooth rank indicators $\{I^{r,\alpha}\}_{r=1}^K$ up to rank $K$, which are ultimately used to calculate the ranking loss.}
\label{fig:smoothi}
\end{figure}

Building up on Eq.~\ref{eq:topr}, we propose SmoothI (pronounced ``smoothie''), a formulation for smooth rank indicators $I^{r,\alpha}_j$ which generalizes the parameterized softmax function by recursively eliminating the $(r-1)$ highest documents in the set. Formally, for any rank $r \in \{1, \ldots, N\}$ and document $j \in \{1, \ldots, N\}$, we define $I^{r,\alpha}_j$ as:

\begin{equation}
\label{eq:defI}
I^{r,\alpha}_j = \frac{e^{\alpha S_j \prod_{l=1}^{r-1} (1 - I^{l,\alpha}_j - \delta)}}{\sum_{j'} e^{\alpha S_{j'} \prod_{l=1}^{r-1} (1 - I^{l,\alpha}_{j'} - \delta)}},
\end{equation}
where $\delta \in (0,0.5)$ is an additional hyperparameter which intuitively controls the mass of the distribution that is allocated to the $(r-1)$ highest documents. A larger $\delta$ leads to further reducing the contribution of the $(r-1)$ highest documents in the distribution at rank $r$.
 The integration of SmoothI in a neural retrieval system is depicted in Fig.~\ref{fig:smoothi}.

The following theorem states that $I^{r,\alpha}_j$ plays the role of a smooth, differentiable approximation of the true rank indicator $I^{r}_j$:
\begin{theorem}
\label{th:smoothI}
For any $r \in \{1, \ldots, N\}$, and $j \in \{1, \ldots, N\}$:
\begin{align*}
\begin{aligned}
& \text{(i)} \enspace \text{$I^{r,\alpha}_{j}$ is differentiable wrt any score in $\mathcal{S}$,} \\
& \text{(ii)} \lim_{\alpha \rightarrow +\infty} I^{r,\alpha}_{j} = I^{r}_{j}. \\
\end{aligned}
\end{align*}
\end{theorem}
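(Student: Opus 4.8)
The plan is to prove both claims by induction on $r$, since the definition of $I^{r,\alpha}_j$ in Eq.~\ref{eq:defI} is itself recursive, depending on $I^{l,\alpha}_j$ for $l < r$. The base case $r = 1$ is the parameterized softmax $I^{1,\alpha}_j = e^{\alpha S_j}/\sum_{j'} e^{\alpha S_{j'}}$: differentiability is immediate since it is a composition of the exponential, sums, and division by a strictly positive quantity; and the limit claim $\lim_{\alpha\to\infty} I^{1,\alpha}_j = I^1_j$ follows from the standard softmax-to-argmax argument, using the assumption that the scores are strictly positive and distinct so that the maximizer is unique. Concretely, for $j = \argmax_{j'} S_{j'}$ one writes $I^{1,\alpha}_j = 1/\sum_{j'} e^{\alpha(S_{j'}-S_j)}$ and notes every exponent in the sum is $\le 0$ with equality only at $j' = j$, so the sum tends to $1$; for $j$ not the maximizer the denominator contains a term blowing up like $e^{\alpha(S_{j^*}-S_j)}$, forcing $I^{1,\alpha}_j \to 0$.

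For the inductive step, assume (i) and (ii) hold for all ranks $1, \ldots, r-1$. For (i): the exponent $\alpha S_j \prod_{l=1}^{r-1}(1 - I^{l,\alpha}_j - \delta)$ is, by the inductive hypothesis, a differentiable function of the scores (finite products and sums of differentiable functions), the exponential of a differentiable function is differentiable, and the normalizing denominator is a finite sum of strictly positive such terms, hence differentiable and nonvanishing; so $I^{r,\alpha}_j$ is differentiable wrt each $S_i$ by the quotient rule. For (ii): I would first use the inductive hypothesis to pass to the limit inside the product, so that as $\alpha \to \infty$ the factor $\prod_{l=1}^{r-1}(1 - I^{l,\alpha}_j - \delta)$ tends to $\prod_{l=1}^{r-1}(1 - I^l_j - \delta)$. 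Now $I^l_j \in \{0,1\}$, so this product equals $(1-\delta)^{r-1}$ when $j$ is not among the top $r-1$ documents, and it equals $(-\delta)\cdot(1-\delta)^{\,r-2}$ (i.e. a negative number, in fact $0 - \delta$ times positive factors) when $j$ \emph{is} one of the top $r-1$ documents — the key point being $\delta \in (0,0.5)$ guarantees $1 - \delta > 0$ while $-\delta < 0$, so the "effective score" $S_j \prod_l(1 - I^l_j - \delta)$ is strictly positive for the not-yet-selected documents and strictly negative for the already-selected ones. Hence in the limiting softmax the already-selected documents get exponent $\to -\infty$ and are killed, while among the remaining documents the one with the largest $S_j$ (which, since $(1-\delta)^{r-1} > 0$ is a common positive factor, is exactly the document realizing the $\argmax$ in Eq.~\ref{eq:topr}) dominates. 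Comparing with Eq.~\ref{eq:topr} then gives $\lim_{\alpha\to\infty} I^{r,\alpha}_j = I^r_j$.

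The main obstacle — and the step deserving the most care — is making the limit argument in (ii) rigorous rather than heuristic. Two subtleties arise. First, one cannot simply "plug in the limit" of the exponents because the limit of the product $\prod_l (1 - I^{l,\alpha}_j - \delta)$ is taken while $\alpha$ also appears as an explicit multiplicative factor $\alpha$ in front; one needs that $I^{l,\alpha}_j$ converges to $I^l_j$ \emph{fast enough} (or at least that the effective scores converge to values that are bounded away from ties) so that the sign and ordering of the limiting effective scores is what controls the softmax. In particular one must rule out the degenerate possibility that two distinct documents end up with the same limiting effective score $S_j\prod_l(1-I^l_j-\delta)$: among the not-yet-selected documents this follows from the $S_j$ being distinct and the common factor $(1-\delta)^{r-1}$ being positive; among already-selected documents, or across the two groups, the negative-vs-positive sign separation handles it, but the argument should be written out. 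Second, one should handle the edge case where the document $j$ under consideration is itself the $r^{th}$-ranked document, checking that its effective score is the unique largest among the surviving (positive-effective-score) documents. A clean way to organize all this is to factor $I^{r,\alpha}_j = 1 / \sum_{j'} \exp\!\big(\alpha[\,e_{j'}^{(\alpha)} - e_j^{(\alpha)}\,]\big)$ where $e_{j'}^{(\alpha)} := S_{j'}\prod_{l=1}^{r-1}(1 - I^{l,\alpha}_{j'} - \delta)$, show $e_{j'}^{(\alpha)} \to e_{j'} := S_{j'}\prod_{l=1}^{r-1}(1 - I^{l}_{j'} - \delta)$ by the inductive hypothesis, and then argue that the asymptotics of each exponential term are governed by $\mathrm{sign}(e_{j'} - e_j)$, with the cases $e_{j'} - e_j < 0$, $= 0$, $> 0$ analyzed separately exactly as in the base case. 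I would also remark that the same factorization underlies the quantitative (exponential-in-$\alpha$) error bounds promised in the introduction, though only the qualitative limit is needed here.
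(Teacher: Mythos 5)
Your proposal is correct and follows essentially the same route as the paper: induction on $r$, differentiability by composition, and for the limit an analysis of the effective scores $\mathcal{A}^{\alpha,r}_{j'} = S_{j'}\prod_{l<r}(1-I^{l,\alpha}_{j'}-\delta)$ showing that, thanks to the inductive hypothesis and $\delta\in(0,0.5)$, the $r$-th document's effective score eventually exceeds every other one by a fixed margin $\eta>0$, after which factorizing out the dominant exponential kills all other terms. The limit-interchange subtlety you flag (the explicit $\alpha$ prefactor versus the convergence of the product) is exactly what the paper's $\epsilon$--$A_\epsilon$--$\eta$ quantification resolves, so no gap remains.
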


\begin{proof}(\textsc{Sketch})
The statement (i) is straightforward to establish by observing that $I^{r,\alpha}_{j}$ is a composition of differentiable functions. The proof of (ii) proceeds by induction over $r$. Let us assume that the property is true up to rank $r-1$ with $r>1$ (the case $r=1$ is straightforward) and let us prove it is true for rank $r$. Let us denote, for any $j'$, $S_{j'} \prod_{l=1}^{r-1} (1 - I^{l,\alpha}_{j'} - \delta)$ by $\mathcal{A}^{\alpha,r}_{j'}$. Let $j_r$ be the $r^{th}$ highest document and let us consider a document $j$ that is not in the set of top-$(r-1)$ documents, denoted as $B^r = \left\{ j_l, \, 1 \le l < r \right\}$, and different from $j_r$. Then, for any arbitrary small number $\epsilon > 0$, exploiting the convergence of $I^{l,\alpha}_{j}$ and $I^{l,\alpha}_{j_r}$ to $0$ for $l<r$, one can show that there exist $A_{\epsilon}>0$ and $\eta>0$ s.t. for any $\alpha > A_{\epsilon}$ one has:
\begin{equation}\label{ineq1:proof}
\mathcal{A}^{\alpha,r}_{j_r} - \mathcal{A}^{\alpha,r}_{j} > S_{j_r} \left( (1-\epsilon-\delta)^{r-1} - \frac{S_j}{S_{j_r}} (1-\delta)^{r-1} \right). \nonumber
\end{equation}
The function $f(\epsilon)=(1-\epsilon-\delta)^{r-1} - \frac{S_j}{S_{j_r}} (1-\delta)^{r-1}$ is continuous and verifies $f(0)>0$. Thus, there exist $\eta>0$ and $\epsilon_0>0$ s.t. for any $\epsilon < \epsilon_0$ there exist $A_{\epsilon}>0$ s.t for any $\alpha > A_{\epsilon}$ one has: $\mathcal{A}^{\alpha,r}_{j_r} - \mathcal{A}^{\alpha,r}_{j} > \eta$. Following a similar reasoning, we show that if $j \in B^r$, there exists $\epsilon_0'$ and $A_{\epsilon}'$ s.t. for any $\epsilon < \epsilon_0'$ and any $\alpha > A_{\epsilon}'$:
\begin{equation}\label{ineq:proof2}
\mathcal{A}_{j_r}^{\alpha,r} - \mathcal{A}_{j}^{\alpha,r} > S_{j_r} (1 - \epsilon - \delta)^{r-2} \left(1 - \epsilon - \delta - \frac{S_{j}}{S_{j_r}} (\epsilon - \delta) \right) > \eta.
\end{equation}
Factorizing $\mathcal{A}^{\alpha,r}_{j_r}$ in the expression of $I^{r,\alpha}_j$ leads to the desired result.
\end{proof}

Th.~\ref{th:smoothI} establishes that the smooth rank indicators we have introduced converge towards their true counterparts. We now turn to assessing the speed of this convergence.

\subsection{Quality of the Approximation}

The following theorem states that $I^{r,\alpha}_j$ is a \textit{good} approximation to $I^r_j$ as the error decreases exponentially with $\alpha$. In other words, it establishes that the convergence towards the true rank indicators is exponentially fast.

\begin{theorem} \label{th:quality}
Let $S_{\text{min}}$ be the smallest score in $\mathcal{S}$ and $\beta$ the minimal ratio between scores $S_j$ and $S_{j'}$ when $S_j > S_{j'}$: $\beta = \min\limits_{(j,j'), \,S_j > S_{j'}} \frac{S_j}{S_{j'}}$. Furthermore, let $c = (\frac{\beta+1}{2})^{\frac{1}{K-1}}$, where $K$ is the rank at which the IR metric is considered, and let $\gamma = \min\left\{\delta,\, 0.5-\delta,\, (1-\delta) \frac{c - 1}{c + 1}\right\}$. If:
\begin{equation}
\label{cond:alpha}
\fooC \quad \alpha > \frac{2^{K-1}\left[ \log(K-1) - \log \gamma \right]}{S_{\text{min}}\min\left\{1,\frac{\beta-1}{2}\right\}}, \nonumber
\end{equation}
then:
\[
\forall r \in \{1, \ldots, K\}, \forall j \in \{1, \ldots, K\}, \,\, | I^r_j - I^{r,\alpha}_j | \le \epsilon_{\alpha},
\]
with $\epsilon_{\alpha} = (K-1) e^{- \alpha \frac{S_{\text{min}}}{2^{K-1}}\min\left\{1,\frac{\beta-1}{2}\right\}}$.
\end{theorem}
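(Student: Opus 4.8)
The plan is to prove the uniform bound by strong induction on the rank $r$, mirroring the recursive structure of the proof of Theorem~\ref{th:smoothI} but turning every estimate quantitative. First I would observe that condition $\fooC$ is exactly equivalent to $\epsilon_{\alpha} < \gamma$: writing $\mu := \frac{S_{\text{min}}}{2^{K-1}}\min\{1,\frac{\beta-1}{2}\}$ so that $\epsilon_\alpha = (K-1)e^{-\alpha\mu}$, the inequality $(K-1)e^{-\alpha\mu} < \gamma$ rearranges precisely to $\fooC$. Since $\gamma \le \min\{\delta,\,0.5-\delta\}$, the bound $\epsilon_\alpha < \gamma$ further gives $\epsilon_\alpha < \delta$ and $1-\delta-\epsilon_\alpha > \tfrac12$, both used throughout. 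The induction hypothesis at rank $r$ is: $|I^{l}_j - I^{l,\alpha}_j| \le \epsilon_\alpha$ for all $l < r$ and all $j$. The base case $r=1$ is the standard softmax estimate: $I^{1,\alpha}_j \le e^{-\alpha(S_{j_1}-S_j)}$ for $j\ne j_1$ and $1-I^{1,\alpha}_{j_1}\le (K-1)e^{-\alpha(S_{j_1}-S_{j_2})}$, combined with $S_{j_1}-S_{j_2}\ge (\beta-1)S_{\text{min}}\ge\mu$.

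Keeping the notation $\mathcal{A}^{\alpha,r}_{j'} = S_{j'}\prod_{l=1}^{r-1}(1 - I^{l,\alpha}_{j'} - \delta)$ from the proof sketch, the softmax form of $I^{r,\alpha}_j$, together with the fact that its normalizing sum is at least its $j_r$-term, gives $I^{r,\alpha}_j \le e^{-\alpha(\mathcal{A}^{\alpha,r}_{j_r}-\mathcal{A}^{\alpha,r}_j)}$ for every $j\ne j_r$, hence $1 - I^{r,\alpha}_{j_r} = \sum_{j\ne j_r}I^{r,\alpha}_j \le (K-1)\max_{j\ne j_r} e^{-\alpha(\mathcal{A}^{\alpha,r}_{j_r}-\mathcal{A}^{\alpha,r}_j)}$. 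The whole statement at rank $r$ therefore reduces to the single \emph{margin bound} $\mathcal{A}^{\alpha,r}_{j_r}-\mathcal{A}^{\alpha,r}_j \ge \mu$ for all $j\ne j_r$: this immediately yields $|I^r_{j_r}-I^{r,\alpha}_{j_r}|=1-I^{r,\alpha}_{j_r}\le(K-1)e^{-\alpha\mu}=\epsilon_\alpha$ and $|I^r_j-I^{r,\alpha}_j|=I^{r,\alpha}_j\le e^{-\alpha\mu}\le\epsilon_\alpha$ for $j\ne j_r$, and since $\epsilon_\alpha<\gamma$ the induction hypothesis is propagated to rank $r$.

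To prove the margin bound I would distinguish the two cases of the proof sketch. If $j\notin B^r=\{j_1,\dots,j_{r-1}\}$ (and $j\ne j_r$), then $I^l_j=I^l_{j_r}=0$ for $l<r$, so by the induction hypothesis every factor $p_l := 1-I^{l,\alpha}_{j_r}-\delta$ and $q_l := 1-I^{l,\alpha}_{j}-\delta$ lies in $[1-\delta-\epsilon_\alpha,\,1-\delta]\subset(\tfrac12,1)$, giving $\prod_l q_l \le \big(\tfrac{1-\delta}{1-\delta-\epsilon_\alpha}\big)^{r-1}\prod_l p_l$. Here the third clause of $\gamma$ enters, namely $\gamma\le(1-\delta)\tfrac{c-1}{c+1}$ with $c=(\tfrac{\beta+1}{2})^{1/(K-1)}>1$: it gives $\tfrac{\epsilon_\alpha}{1-\delta}<\tfrac{c-1}{c+1}$, hence $\tfrac{1-\delta}{1-\delta-\epsilon_\alpha}\le c$, hence $\prod_l q_l\le c^{\,r-1}\prod_l p_l\le c^{\,K-1}\prod_l p_l=\tfrac{\beta+1}{2}\prod_l p_l$. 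Using $S_j\le S_{j_r}/\beta$ (since $j_r$ outranks $j$) then yields $\mathcal{A}^{\alpha,r}_{j_r}-\mathcal{A}^{\alpha,r}_j \ge S_{j_r}\prod_l p_l - \tfrac{S_{j_r}}{\beta}\cdot\tfrac{\beta+1}{2}\prod_l p_l = \prod_l p_l\cdot S_{j_r}\cdot\tfrac{\beta-1}{2\beta}$. The step I expect to matter most is next: because $j_r$ outranks $j$, one has $S_{j_r}\ge\beta S_j\ge\beta S_{\text{min}}$ (not merely $S_{j_r}\ge S_{\text{min}}$), and $\prod_l p_l\ge(1-\delta-\epsilon_\alpha)^{r-1}>2^{-(K-1)}$; combining these gives $\mathcal{A}^{\alpha,r}_{j_r}-\mathcal{A}^{\alpha,r}_j\ge 2^{-(K-1)}\cdot\beta S_{\text{min}}\cdot\tfrac{\beta-1}{2\beta}=\tfrac{S_{\text{min}}}{2^{K-1}}\cdot\tfrac{\beta-1}{2}\ge\mu$. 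If instead $j=j_m\in B^r$ with $m<r$, the induction hypothesis forces $I^{m,\alpha}_{j_m}\ge 1-\epsilon_\alpha$, so the factor $1-I^{m,\alpha}_{j_m}-\delta\le\epsilon_\alpha-\delta<0$ while the remaining factors of the product stay positive; hence $\mathcal{A}^{\alpha,r}_{j_m}<0$ and $\mathcal{A}^{\alpha,r}_{j_r}-\mathcal{A}^{\alpha,r}_{j_m}>\mathcal{A}^{\alpha,r}_{j_r}\ge S_{\text{min}}\prod_l p_l>S_{\text{min}}2^{-(K-1)}\ge\mu$, which is the quantitative counterpart of inequality~\eqref{ineq:proof2}.

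The case $j\notin B^r$ carries essentially all the difficulty; the rest is bookkeeping. The two ideas that make it close with the stated constants are (i) factoring out $\prod_l p_l$ and bounding the \emph{ratio} of the two products rather than the two products separately, and (ii) using $S_{j_r}\ge\beta S_{\text{min}}$, which is exactly what upgrades the coefficient from a $\tfrac{\beta-1}{\beta(\beta+1)}$-type quantity to $\tfrac{\beta-1}{2}$ and hence makes $\min\{1,\tfrac{\beta-1}{2}\}$ attainable. The three-term definition of $\gamma$ is tuned so that each clause pays for one step — $0.5-\delta$ for the factors being positive and at least $\tfrac12$, $\delta$ for the sign flip in the $B^r$ case, and $(1-\delta)\tfrac{c-1}{c+1}$ for the product-ratio estimate via $c^{K-1}=\tfrac{\beta+1}{2}$. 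One finally checks the induction is self-consistent: the bound obtained at rank $r$ is again $\le\epsilon_\alpha<\gamma$ and is not amplified across the $K$ ranks, because the margin $\mu$ does not depend on $r$.
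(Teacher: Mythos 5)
Your proof is correct and follows essentially the same route as the paper's: induction on $r$ with the same margins on $\mathcal{A}^{\alpha,r}_{j_r}-\mathcal{A}^{\alpha,r}_{j}$, the same case split on $j \in B^r$ versus $j \notin B^r$, the same roles for the three clauses of $\gamma$ (your observation that \fooC{} is exactly $\epsilon_\alpha<\gamma$ is the right way to see it), and the same final factorization of $\mathcal{A}^{\alpha,r}_{j_r}$ in the softmax; your treatment of the $B^r$ case via $\mathcal{A}^{\alpha,r}_{j_m}<0$ is a slight simplification of the paper's explicit bound, and your ratio estimate $\tfrac{1-\delta}{1-\delta-\epsilon_\alpha}\le c$ is a marginally tighter variant of the paper's $\tfrac{1-\delta+\epsilon_\alpha}{1-\delta-\epsilon_\alpha}\le c$, for which the clause $(1-\delta)\tfrac{c-1}{c+1}$ is actually tuned. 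The only wrinkle --- inherited from the theorem statement rather than introduced by you --- is that $\sum_{j\ne j_r} I^{r,\alpha}_j$ has $N-1$ terms, not $K-1$, so the constant in $\epsilon_\alpha$ should strictly read $N-1$ unless $N=K$.
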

\begin{proof}(\textsc{Sketch})
The proof proceeds by induction over $r$. Let us assume that the property is true up to rank $r-1$ with $r>1$ (the case $r=1$ is direct) and let us prove it is true for rank $r$. As for Th.~\ref{th:smoothI}, let us denote, for any $j'$, $S_{j'} \prod_{l=1}^{r-1} (1 - I^{l,\alpha}_{j'} - \delta)$ by $\mathcal{A}^{\alpha,r}_{j'}$, let $j_r$ be the $r^{th}$ highest document and let $B^r$ be the set of $(r-1)$ highest documents. Then, one can show that $\mathcal{A}^{\alpha,r}_{j} - \mathcal{A}^{\alpha,r}_{j_r}$ is less than:
\begin{align}
\left\{
\begin{aligned}
& - S_{\text{min}} (1 - \delta - \epsilon_{\alpha})^{r-2} ((\delta - \epsilon_{\alpha}) +  \beta (1- \delta - \epsilon_{\alpha})), \,\, \mbox{for} \, j \in B^r, \nonumber \\
& - S_{\text{min}} (1 - \delta - \epsilon_{\alpha})^{r-1} \left(\beta - \left(\frac{1 -\delta + \epsilon_{\alpha}}{1 - \delta - \epsilon_{\alpha}}\right)^{r-1}\right), \,\, \mbox{for} \, j \notin B^r, j \ne j_r. \nonumber
\end{aligned}
\right.
\end{align}
Condition (C1) on $\alpha$ ensures that $1 - \delta - \epsilon_{\alpha} > 0.5$ and thus $1 - \delta + \epsilon_{\alpha} > 0.5$ and $((\delta - \epsilon_{\alpha}) +  \beta (1- \delta - \epsilon_{\alpha})) > 0.5$ as $\beta > 1$ and $\delta - \epsilon_{\alpha} > 0$. It also ensures that:
\[
\left(\beta - \left(\frac{1 -\delta + \epsilon_{\alpha}}{1 - \delta - \epsilon_{\alpha}}\right)^{r-1}\right) \ge \frac{\beta - 1}{2}.
\]
Thus, for any $ j \ne j_r$: $\mathcal{A}^{\alpha,r}_{j} - \mathcal{A}^{\alpha,r}_{j_r} \le - \frac{S_{\text{min}}}{2^{r-1}}\min\left\{1,\frac{\beta-1}{2}\right\}$. Factorizing $\mathcal{A}^{\alpha,r}_{j_r}$ in the expression of $I^{r,\alpha}_j$ leads to the desired result. The case $j=j_r$ is treated directly with the same factorization.
\end{proof}

\noindent As one can notice, both the right-hand side of Condition $\fooC$ and $\epsilon_{\alpha}$ can theoretically be made as small as one wants by increasing $\alpha$ or, equivalently, rescaling the scores of the documents without changing their ranking. This directly derives from the use of the exponential function that makes the softmax behave as a true rank indicator when the document scores are spread and have high values. Note also that the proof of this theorem requires $0 < \delta < 0.5$, hence the condition $\delta \in (0,0.5)$ mentioned in Th. \ref{th:smoothI}.

The above approximation error directly translates to compositions of linear combinations and Lipschitz functions of the rank indicators, used \textit{e.g.} in IR metrics as will be further discussed in Section~\ref{sec:ir-metrics}.
\begin{corollary}\label{th:approx}
For $K \in \{1, \ldots, N\}$, let $\mathbf{I} = \{I^r_j\}$ and $\mathbf{I}^\alpha = \{I^{r,\alpha}_j\}$ for $1 \leq r \leq K, 1 \leq j \leq N$. Consider the function $h$ such that $h(\mathbf{I} ; \mathbf{a}, \mathbf{b}) = \sum_{r=1}^K a_k g(\sum_j b_j I^r_j)$, where $g$ is a Lipschitz function with Lipschitz constant $\ell$, and $\mathbf{a} = \{a_k\}_{k=1}^K$ and $\mathbf{b} = \{b_j\}_{j=1}^N$ are real-valued constants. Then:
\[
\left|h(\mathbf{I} ; \mathbf{a}, \mathbf{b}) - h(\mathbf{I}^\alpha ; \mathbf{a}, \mathbf{b})\right| \le \Bigg(\sum_{k=1}^K |a_k|\Bigg) \Bigg(\sum_{j=1}^N |b_j|\Bigg) \ell \epsilon_{\alpha}.
\]
\end{corollary}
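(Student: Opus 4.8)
The plan is to obtain this bound directly from Theorem~\ref{th:quality}, using nothing more than the triangle inequality and the Lipschitz hypothesis on $g$; no new analytic estimate is required. First I would write the difference as a single sum over ranks,
\[
h(\mathbf{I};\mathbf{a},\mathbf{b}) - h(\mathbf{I}^\alpha;\mathbf{a},\mathbf{b}) = \sum_{r=1}^{K} a_r \left( g\left(\sum_{j} b_j I^r_j\right) - g\left(\sum_{j} b_j I^{r,\alpha}_j\right) \right),
\]
and apply the triangle inequality to move the absolute value through the outer sum, which gives the upper bound $\sum_{r=1}^{K} |a_r|\, \big| g(\sum_j b_j I^r_j) - g(\sum_j b_j I^{r,\alpha}_j) \big|$.

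Next I would use the Lipschitz property of $g$: for each $r$ we have $\big| g(\sum_j b_j I^r_j) - g(\sum_j b_j I^{r,\alpha}_j) \big| \le \ell\, \big| \sum_j b_j (I^r_j - I^{r,\alpha}_j) \big|$, and a second application of the triangle inequality bounds the right-hand side by $\ell \sum_j |b_j|\, |I^r_j - I^{r,\alpha}_j|$. Theorem~\ref{th:quality} --- whose Condition~\fooC{} is implicitly assumed in the corollary, since the symbol $\epsilon_\alpha$ refers to the quantity defined there --- then gives $|I^r_j - I^{r,\alpha}_j| \le \epsilon_\alpha$ for every $r$ and $j$, so each inner sum is at most $\epsilon_\alpha \sum_j |b_j|$. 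Substituting back into the outer sum and factoring out the $r$-independent quantity $\ell\, \epsilon_\alpha \sum_j |b_j|$ yields $\big( \sum_{r=1}^K |a_r| \big)\big( \sum_{j=1}^N |b_j| \big)\, \ell\, \epsilon_\alpha$, which is exactly the claim.

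The one point that deserves a moment's care --- and what I would flag as the sole, minor obstacle --- is the range of the document index: Theorem~\ref{th:quality} is stated for $j$ ranging over a finite set, whereas in the corollary $j$ runs over all $N$ documents. I would observe that the inductive argument behind Theorem~\ref{th:quality} in fact controls $|I^r_j - I^{r,\alpha}_j|$ uniformly over \emph{all} $j \in \{1,\dots,N\}$: it partitions the documents into the top-$(r-1)$ set $B^r$, the current document $j_r$, and the remaining documents $j \notin B^r \cup \{j_r\}$, and those are precisely the three cases handled in that proof. Hence the per-indicator bound $\epsilon_\alpha$ applies verbatim here, and everything else is a mechanical chain of inequalities.
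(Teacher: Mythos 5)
Your proof is correct and is exactly the ``direct application'' the paper has in mind: the corollary is stated without proof, and the intended argument is precisely your chain of triangle inequalities plus the Lipschitz bound, with Theorem~\ref{th:quality} supplying the uniform bound $|I^r_j - I^{r,\alpha}_j| \le \epsilon_\alpha$. Your side remark about the document index is also right --- the ``$\forall j \in \{1,\ldots,K\}$'' in the theorem statement is a typo for $\{1,\ldots,N\}$, since its proof handles all three cases $j \in B^r$, $j = j_r$, and $j \notin B^r \cup \{j_r\}$, which exhaust $\{1,\ldots,N\}$.
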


\subsection{Gradient Stabilization in Neural Architectures}
\label{sec:grad-stabilization}

In pilot experiments, we found that the recursive computation in $I^{r,\alpha}_j$ (Eq.~\ref{eq:defI}) could sometimes lead to numerical instability when computing its gradient with respect to the scores $\mathcal{S}$. We put this on the account of the complexity of the computation graph, which results from the recursion creating multiple paths between the $I^{r,\alpha}_j$ node and any score node $S_{j'}$. To alleviate this issue, we adopted a simple solution which consists in applying the stop-gradient operator to $\prod_{l=1}^{r-1} (1 - I^{l,\alpha}_{j'} - \delta)$ in the definition of $I_j^{r,\alpha}$ to ``prune'' the computation graph in the backward pass. This operator, which was used in previous works such as~\cite{VandenOord2017}, acts as the identity function in the forward pass and sets the partial derivatives of its argument to zero in the backward pass, leading to the following slightly modified definition of $I^{r,\alpha}_j$ which we use in practice:
\begin{equation}
\label{eq:defI-modif}
I^{r,\alpha}_j = \frac{e^{\alpha S_j  \text{sg}\left[\prod_{l=1}^{r-1} (1 - I^{l,\alpha}_j - \delta)\right]}}{\sum_{j'} e^{\alpha S_{j'}  \text{sg}\left[\prod_{l=1}^{r-1} (1 - I^{l,\alpha}_{j'} - \delta)\right]}},
\end{equation}
where $\mbox{sg}[\cdot]$ is the stop-gradient operator. In other words, we consider that the lower-rank smooth indicators $I^{l,\alpha}_{j'}$ ($l < r$) in $I^{r,\alpha}_j$ are constant with respect to $\mathcal{S}$.

\section{Application to IR Metrics}
\label{sec:ir-metrics}

Based on the proposed SmoothI, one can define losses that approximate the standard IR metrics, and that may be easily used to optimize the upstream neural IR model producing the scores $\mathcal{S}$, as depicted in Fig.~\ref{fig:smoothi}. In particular, a simple approximation of P@$K_q$ is obtained by replacing $rel_q(j_r)$ with $ \sum_{j=1}^N rel_q(j) I^{r,\alpha}_j$ in Eq.~\ref{eq:p@k}, leading to the following objective function:
\begin{equation}\label{eq:p@k-a}
\mbox{P@}K^{\alpha}_q = \frac{1}{K} \sum_{r=1}^K \sum_{j=1}^N rel_q(j) I^{r,\alpha}_j, 
\end{equation}
from which one obtains the following approximation of AP$_q$:
\begin{equation}\label{eq:ap-a}
\mbox{AP}_q = \frac{1}{\sum^N_{j=1} rel_q(j)} \sum_{K=1}^N \left( \sum_{j=1}^N rel_q(j) I^{K,\alpha}_j \right) P@K^{\alpha}_q.
\end{equation}
Similarly, the approximation for NDCG@$K_q$ is given by:
\begin{equation}\label{eq:ndcg@k-a}
\mbox{NDCG@}K^{\alpha}_q = \frac{1}{N_K^q} \sum_{r=1}^K \frac{2^{\sum^N_{j=1} rel_q(j) I_j^{r,\alpha}} - 1}{\log_2 (k+1)}.
\end{equation}
A direct application of Corollary~\ref{th:approx} and averaging over $Q$ queries leads to:
\[
\left|\mbox{P@}K - \mbox{P@}K^{\alpha}\right| \le m \epsilon_{\alpha},
\]
where $m$ is the average number of relevant documents per query:
\[
m=\frac{1}{Q} \sum_{q=1}^Q \sum_{j=1}^N rel_q(j).
\]
For MAP, a direct calculation leads to:
\[
|\mbox{MAP} - \mbox{MAP}^{\alpha}| \le 2N(\epsilon_{\alpha} + \epsilon^2_{\alpha}).
\]
Similarly, for NDCG@$K$, one obtains, using a Taylor expansion of the function $2^x$ around $x=0$:
\[
 \left|\mbox{NDCG@}K - \mbox{NDCG@}K^{\alpha}\right| \le 
 N \epsilon_{\alpha}.
\]
This shows that the approximations obtained for P@$K$, MAP and NDCG@$K$ (and \textit{a fortiori} NDCG) are of exponential quality.

\section{Experiments}
\label{sec:IR}

We conducted both feature-based learning to rank and text-based IR experiments to validate SmoothI's ability to define high-quality differentiable approximations of IR metrics, and hence meaningful listwise losses. In particular, our evaluation seeks to address the following research questions:
\begin{enumerate}[leftmargin=*,labelindent=7.5mm,labelsep=3mm]
    \item[\textbf{(RQ1)}] How do ranking losses based on SmoothI perform for learning-to-rank IR in comparison to \textbf{(a)} state-of-the-art listwise losses and \textbf{(b)} ranking losses derived from recent differentiable sorting methods?
    \item[\textbf{(RQ2)}] Among the differentiable IR metrics derived from SmoothI, is there any metric that yields superior results?
    \item[\textbf{(RQ3)}] How efficient is SmoothI in comparison to competing approaches? 
    \item[\textbf{(RQ4)}] Do neural models for text-based IR (e.g., BERT) benefit from SmoothI's listwise loss?
\end{enumerate}

The remainder of this section is organized as follows. Section~\ref{sec:exp-setup} describes the experimental setup of the learning-to-rank experiments. Sections~\ref{sec:ltr-ir} and \ref{sec:metric-impact} provide the results of the learning-to-rank IR experiments, respectively answering (RQ1) and (RQ2). Section~\ref{sec:efficiency} tackles (RQ3) by comparing the efficiency of the different learning-to-rank approaches. Finally, Section~\ref{sec:adhoc-ir} details our experiments with BERT on text-based IR, thus addressing (RQ4).

\subsection{Learning-to-Rank Experimental Setup}
\label{sec:exp-setup}

\begin{figure}[t] 
\centering
\includegraphics[width=0.8\textwidth]{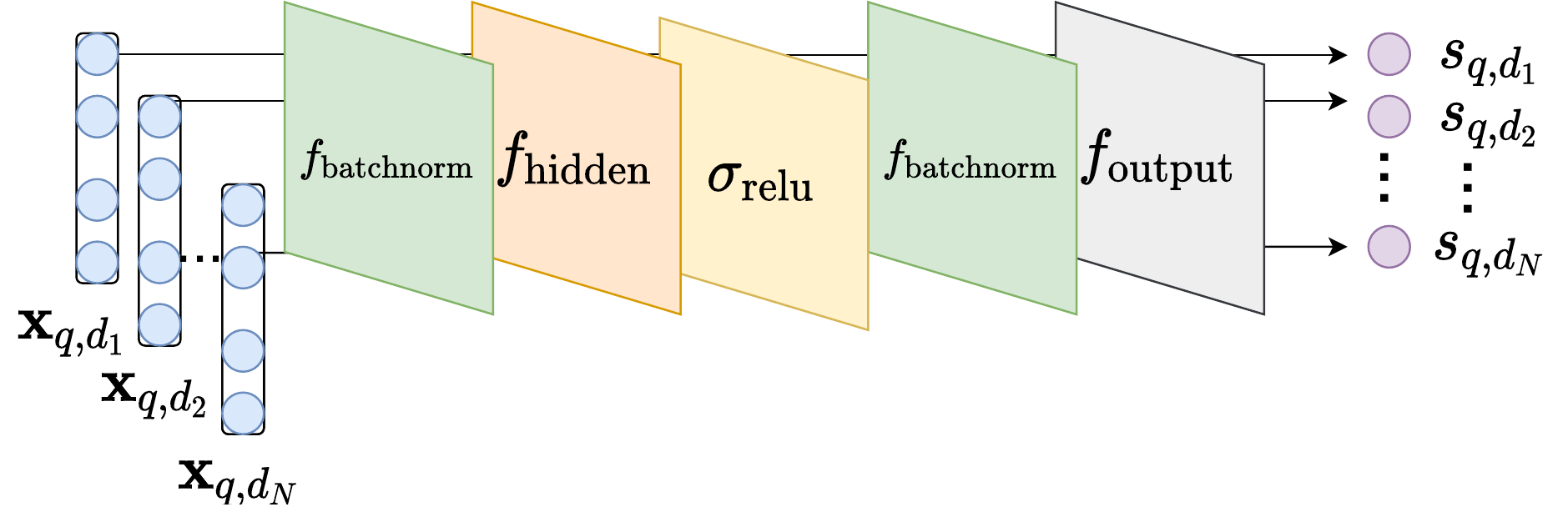}
\caption{Neural model used to obtain the scores for all the documents $(d_{1}, \dots, d_{N})$ for a query $q$. The network takes the query-document features $x_{q,d_{k}}$ and outputs the estimated score $s_{q,d_{k}}$ for each query-document pair $(q,d_{k})$.}
\label{fig:ir-network}
\end{figure}

\begin{table}[t]
    \centering
    \caption{Statistics of the learning-to-rank datasets, averaged over 5 folds. YLTR is given only for Set-1.}
    \begin{tabular}{@{}lrrrrrr@{}}
    \toprule
    & \multicolumn{3}{c}{\#queries} & \multicolumn{3}{c}{\#docs} \\ \cmidrule(l){2-7} 
    & train & val & test & train & val & test \\ \midrule
    MQ2007 & 1,015 & 338 & 338 & 41,774 & 13,925 & 13,925 \\
    MQ2008 & 470 & 157 & 157 & 9,127 & 3,042 & 3,042 \\
    WEB30K & 18,919 & 6,306 & 6,306 & 2,262,675 & 754,225 & 754,225 \\
    YLTR & 19,944 & 2,994 & 6,983 & 473,134 & 71,083 & 165,660 \\ \bottomrule
    \end{tabular}
    \label{tab:ltr-datasets}
\end{table}

\paragraph{Datasets.}
To evaluate our approach, we conducted learning-to-rank experiments on standard, publicly available datasets, namely LETOR 4.0 \textit{MQ2007}, \textit{MQ2008} and MSLR-WEB30K \cite{Qin2013letor4} (hereafter denoted as \textit{WEB30K}), respectively containing 1,692, 784 d an31,531 and 69,623, 15,211 and 3,771,125 documents, and the Yahoo learning-to-rank  Set-1 data\-set \cite{Chapelle2010-yltr} (hereafter denoted as \textit{YLTR}), containing 29,921 queries and 709,877 documents. In these datasets, each query-document pair is associated with a feature vector. We rely on the standard $5$-fold split in train, validation and test for the LETOR collections and the standard split in train, validation and test for YLTR. \footnote{Given that YLTR does not provide a 5-fold cross-validation split, we perform 5 runs for the same split on this dataset.} The statistics of the different datasets for their respective folds are detailed in Table~\ref{tab:ltr-datasets}.

\paragraph{Neural model and compared losses.}
To rank documents, we used the same fully-connected feedforward neural network for all approaches. It is composed of an input layer followed by batch normalization, a 1024-dimensional hidden layer with ReLU activation again followed by batch normalization, and an output layer that provides the score of the document given as input. Figure~\ref{fig:ir-network} provides a schematic view of this network.

This network was trained with different listwise learning-to-rank losses for comparison: \textbf{ListNET} \cite{Cao2007LRP}, \textbf{ListMLE} \cite{Xia2008}, \textbf{ListAP} \cite{Revaud2019}, \textbf{LambdaLoss} \cite{Wang2018}, \textbf{Approx} \cite{Qin2010},  and \textbf{SmoothI} (our approach). As LambdaLoss, Approx and SmoothI can be used to approximate different IR metrics, we defined eight variants for each approach, respectively optimizing P@$\{1,5,10\}$, NDCG@$\{1,5,10,N\}$, and MAP whenever possible. ListNET, ListAP\footnote{ \url{https://github.com/almazan/deep-image-retrieval}}, Approx and SmoothI\footnote{\url{https://github.com/*****/SmoothI} (anonymized for review purposes)} were implemented in Pytorch \cite{pytorch}. For ListMLE and LambdaLoss, we relied on the TF-Ranking library \cite{tensorflowranking}. For evaluating the performance of the different approaches, we used TREC evaluation tool's Python implementation \cite{treceval}.

In addition to the aforementioned listwise losses originating from the learning-to-rank community, we also considered additional losses derived from recent approaches proposed for differentiable sorting and ranking \cite{Blondel2020,Cuturi2019,Grover2019,prillo2020} for the sake of comprehensiveness, using the same neural network as before. Although not necessarily designed originally for learning to rank, these approaches provide a natural framework for such application. These works are described in more detail in Section~\ref{sec:relwork}. We compared in particular against \textbf{NeuralSort}\footnote{\url{https://github.com/ermongroup/neuralsort}} \cite{Grover2019} and \textbf{SoftSort}\footnote{\url{https://github.com/sprillo/softsort}} \cite{prillo2020}, which both propose a continuous relaxation of the sorting operator based on unimodel row-stochastic matrices; \textbf{OT}\footnote{\url{https://github.com/google-research/google-research/tree/master/soft_sort}} \cite{Cuturi2019}, which frames differentiable sorting as an optimal transport problem; and \textbf{FastSort}\footnote{\url{https://github.com/google-research/fast-soft-sort}}~\cite{Blondel2020}, which devised an efficient differentiable approximation based on projections onto the convex hull of permutations. We used the original implementations provided by the authors. For NeuralSort, we used the deterministic version of the approach as it was shown in \cite{Grover2019} to perform similarly to the stochastic one. The IR metric optimized for all differentiable sorting approaches is set to NDCG\footnote{Note that any other IR metric could have been used here as well, but we restrained ourselves to NDCG as we found it led to the best performance (as also observed for SmoothI, see Section \ref{sec:ir-metrics}).} (\textit{i.e.}, NDCG@N).

\paragraph{Hyperparameters.}
The mini-batch size is chosen as 128, and the network parameters are optimized by Adam optimizer \cite{Kingma2014-adam}, with an initial learning rate in the range $\{10^{-2}, \,\, 10^{-3}\}$. Each model is trained with 50 epochs and the parameters (weights) leading to the lowest validation error are selected.
The hyperparameters $\alpha$ and $\beta$ for Approx are searched over $\{0.1,$ $1,$ $10,$ $10^{2}\}$.
Similarly, for SmoothI, a line search was performed on the hyperparameter $\alpha$ in the range $\{0.1,$ $1,$ $10,$ $10^{2}\}$. The hyperparameter $\delta$ was simply set to $0.1$ as we found in pilot experiments that this value consistently gave the best results on the validation set. This is illustrated in Fig.~\ref{fig:ir-hyps} which shows the NDCG performance on the validation set of MQ2007. We can observe that no matter what the choice of $\alpha$ is, setting $\delta = 0.1$ leads to the best results (or very close to the best for $\alpha = 0.1$). For NeuralSort and SoftSort the temperature is searched over $\{0.1,$ $1,$ $10,$ $10^{2}\}$. The regularization strength of FastSort and OT is searched over $\{10^{-2},$ $0.1,$ $1,$ $10\}$.

\begin{figure}[t] 
\centering
\includegraphics[width=0.7\textwidth]{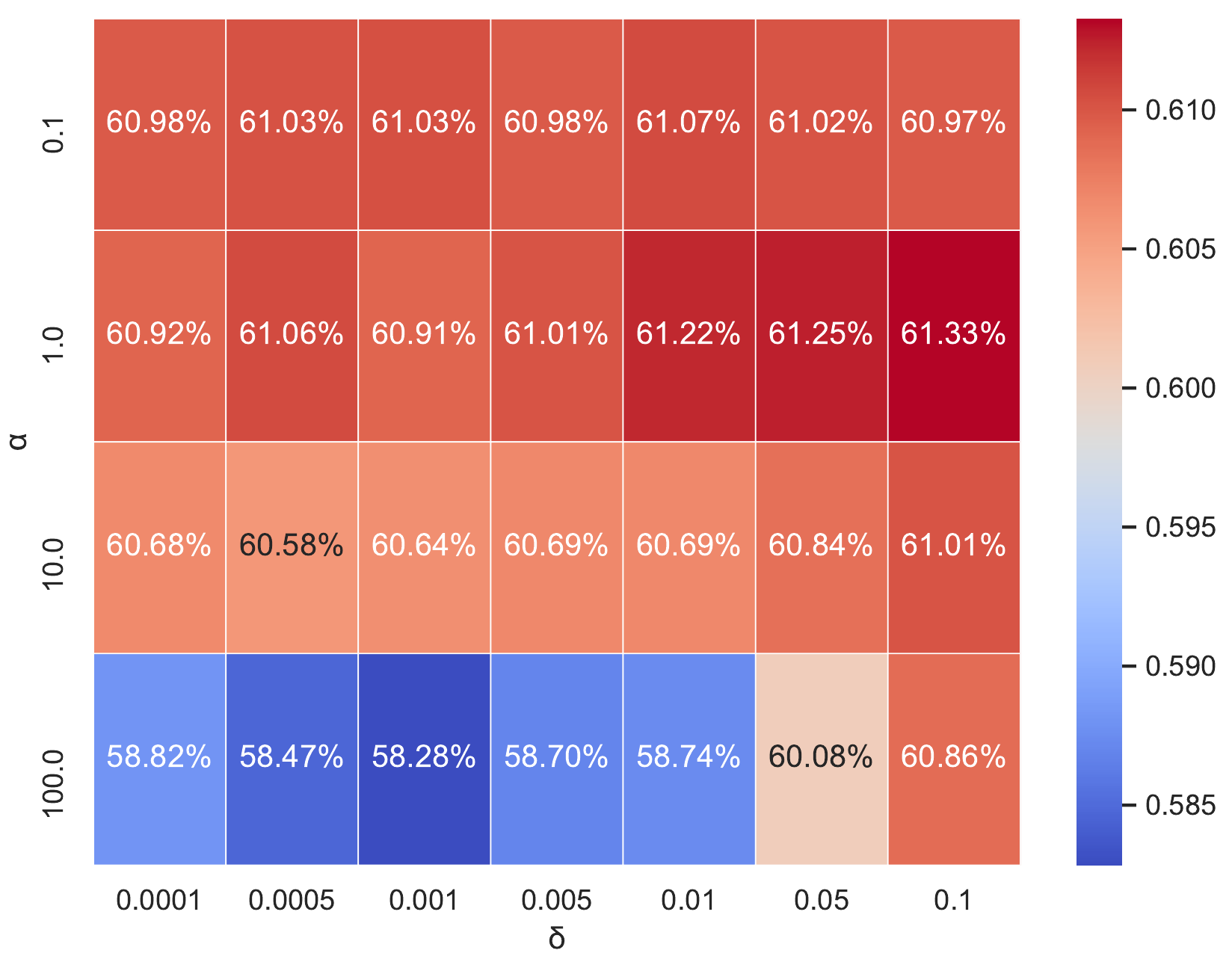}
\caption{NDCG performance (averaged over 5 folds) of SmoothI on MQ2007's validation set with different $\alpha$ and $\delta$.}
\label{fig:ir-hyps}
\end{figure}

\subsection{Comparison to Learning-to-Rank IR Baselines (RQ1)}
\label{sec:ltr-ir}

In order to tackle (RQ1), we study in this section the retrieval performance of the different learning-to-rank losses derived from SmoothI and baseline approaches. Table~\ref{tab:ir-results} presents the learning-to-rank results, averaged over 5 folds for MQ2007, MQ2008 and Web30K, each fold using a different random initialization, and averaged over five random initializations for YLTR. We reported the significance using a Student t-test with Bonferroni correction at 5\% significance level. For space reasons, we only show in this table the best results obtained across the variants of each approach~-- each variant optimizing an IR metric among P@$\{1,5,10\}$, NDCG@$\{1,5,10,N\}$, and MAP.
As one can notice, SmoothI is the best performing method on MQ2007, MQ2008 and Web30K, even though the difference on MQ2007 and MQ2008 was often not found to be significant, in particular with respect to Approx. On Web30K, SmoothI significantly outperforms all methods on P@1, NDCG@1, NDCG@5 and NDCG@10, and all methods but Approx on P@5 and NDCG. Approx however obtained significantly better performance on MAP. The results are more contrasted on YLTR. On the one hand, ListMLE and SmoothI are on par according to the NDCG-based metrics as they respectively obtained the best performance in terms of NDCG@$\{1, 5\}$ and NDCG@$\{10, N\}$, with significant differences only at cutoff 1 and 10. On the other hand, in terms of precision-based metrics and MAP, ListMLE outperformed all other approaches except Approx.

Turning to the listwise losses obtained from the differentiable sorting approaches (FastSort, OT, NeuralSort and SoftSort), we observe that these methods demonstrate competitive performance on the learning to rank task, with the exception of NeuralSort. SmoothI nonetheless outperformed all of these approaches, in particular with significant differences on Web30K for all metrics. In summary, over all the collections, we can conclude that SmoothI proves to be very competitive on learning to rank with respect to traditional listwise losses and differentiable sorting approaches.

\begin{table}[t]
\centering
\caption{Leaning-to-rank retrieval results. Mean test performance $\pm$ standard error is calculated over 5 folds for MQ2007, MQ2008 and Web30K, and 5 random initializations for YLTR (as no predefined folds are available). Best results are in bold and ``$\,{\dagger}\,$'' indicates a model significantly worse than the best one according to a paired t-test with Bonferroni correction at 5\%.}
\vspace{0.2cm}
\label{tab:ir-results}
\scalebox{0.7}{
\begin{tabular}{@{}clllllllll@{}}
\toprule
\multicolumn{1}{l}{} &  & P@1 & P@5 & P@10 & NDCG@1 & NDCG@5 & NDCG@10 & NDCG & MAP \\ \midrule
\multirow{10}{*}{\rotatebox[origin=c]{90}{\parbox[c]{1cm}{\centering MQ2007}}} & ListNet \cite{Cao2007LRP} & 0.463$\pm$0.008 & 0.412$\pm$0.010$^{\dagger}$ & 0.371$\pm$0.008$^{\dagger}$ & 0.420$\pm$0.010 & 0.422$\pm$0.011$^{\dagger}$ & 0.442$\pm$0.010$^{\dagger}$ & 0.603$\pm$0.007$^{\dagger}$ & 0.449$\pm$0.009 \\
& ListMLE \cite{Xia2008} & 0.442$\pm$0.017$^{\dagger}$ & 0.397$\pm$0.011$^{\dagger}$ & 0.366$\pm$0.007$^{\dagger}$ & 0.395$\pm$0.019$^{\dagger}$ & 0.405$\pm$0.016$^{\dagger}$ & 0.431$\pm$0.014$^{\dagger}$ & 0.594$\pm$0.009$^{\dagger}$ & 0.439$\pm$0.013 \\
& ListAP \cite{Revaud2019} & 0.457$\pm$0.006$^{\dagger}$ & 0.405$\pm$0.010$^{\dagger}$ & 0.369$\pm$0.007$^{\dagger}$ & 0.405$\pm$0.008$^{\dagger}$ & 0.414$\pm$0.009$^{\dagger}$ & 0.438$\pm$0.008$^{\dagger}$ & 0.600$\pm$0.005$^{\dagger}$ & 0.449$\pm$0.007 \\
& LambdaLoss \cite{Wang2018} & 0.452$\pm$0.011$^{\dagger}$ & 0.403$\pm$0.007$^{\dagger}$ & 0.372$\pm$0.006$^{\dagger}$ & 0.407$\pm$0.013$^{\dagger}$ & 0.415$\pm$0.010$^{\dagger}$ & 0.440$\pm$0.009$^{\dagger}$ & 0.601$\pm$0.007$^{\dagger}$ & 0.449$\pm$0.008 \\
& Approx \cite{Qin2010} & 0.479$\pm$0.015 & 0.419$\pm$0.008 & 0.384$\pm$0.006 & 0.430$\pm$0.017 & 0.430$\pm$0.011 & 0.455$\pm$0.010 & 0.611$\pm$0.007 & \textbf{0.467$\pm$0.007} \\ \cline{2-10} 
& FastSort \cite{Blondel2020} & 0.461$\pm$0.010 & 0.405$\pm$0.007$^{\dagger}$ & 0.367$\pm$0.004$^{\dagger}$ & 0.413$\pm$0.012$^{\dagger}$ & 0.417$\pm$0.008$^{\dagger}$ & 0.438$\pm$0.007$^{\dagger}$ & 0.599$\pm$0.007$^{\dagger}$ & 0.445$\pm$0.008 \\
& OT \cite{Cuturi2019} & 0.451$\pm$0.014$^{\dagger}$ & 0.405$\pm$0.009$^{\dagger}$ & 0.375$\pm$0.006$^{\dagger}$ & 0.406$\pm$0.014$^{\dagger}$ & 0.414$\pm$0.012$^{\dagger}$ & 0.441$\pm$0.011$^{\dagger}$ & 0.602$\pm$0.007$^{\dagger}$ & 0.457$\pm$0.008 \\
& NeuralSort \cite{Grover2019} & 0.373$\pm$0.083$^{\dagger}$ & 0.326$\pm$0.073$^{\dagger}$ & 0.299$\pm$0.067$^{\dagger}$ & 0.334$\pm$0.075$^{\dagger}$ & 0.338$\pm$0.076$^{\dagger}$ & 0.357$\pm$0.080$^{\dagger}$ & 0.547$\pm$0.054$^{\dagger}$ & 0.386$\pm$0.066$^{\dagger}$ \\
& SoftSort \cite{prillo2020} & 0.469$\pm$0.008 & 0.413$\pm$0.005$^{\dagger}$ & 0.378$\pm$0.007 & 0.425$\pm$0.010 & 0.426$\pm$0.007$^{\dagger}$ & 0.452$\pm$0.008$^{\dagger}$ & 0.608$\pm$0.006 & 0.457$\pm$0.007 \\ \cline{2-10} 
& SmoothI (ours) & \textbf{0.488$\pm$0.007} & \textbf{0.424$\pm$0.006} & \textbf{0.384$\pm$0.006} & \textbf{0.441$\pm$0.010} & \textbf{0.439$\pm$0.010} & \textbf{0.461$\pm$0.008} & \textbf{0.612$\pm$0.007} & 0.461$\pm$0.007 \\ \midrule \midrule
\multirow{10}{*}{\rotatebox[origin=c]{90}{\parbox[c]{1cm}{\centering MQ2008}}} & ListNet \cite{Cao2007LRP} & 0.392$\pm$0.021$^{\dagger}$ & 0.318$\pm$0.016$^{\dagger}$ & 0.231$\pm$0.010$^{\dagger}$ & 0.339$\pm$0.017$^{\dagger}$ & 0.422$\pm$0.019$^{\dagger}$ & 0.468$\pm$0.019$^{\dagger}$ & 0.514$\pm$0.018$^{\dagger}$ & 0.428$\pm$0.016$^{\dagger}$ \\
& ListMLE \cite{Xia2008} & 0.415$\pm$0.018$^{\dagger}$ & 0.337$\pm$0.017$^{\dagger}$ & 0.240$\pm$0.014$^{\dagger}$ & 0.365$\pm$0.018$^{\dagger}$ & 0.445$\pm$0.019$^{\dagger}$ & 0.486$\pm$0.024$^{\dagger}$ & 0.526$\pm$0.021$^{\dagger}$ & 0.448$\pm$0.021$^{\dagger}$ \\
& ListAP \cite{Revaud2019} & 0.420$\pm$0.014 & 0.330$\pm$0.016$^{\dagger}$ & 0.240$\pm$0.012$^{\dagger}$ & 0.371$\pm$0.013 & 0.442$\pm$0.019$^{\dagger}$ & 0.489$\pm$0.020$^{\dagger}$ & 0.532$\pm$0.018$^{\dagger}$ & 0.455$\pm$0.017$^{\dagger}$ \\
& LambdaLoss \cite{Wang2018} & 0.441$\pm$0.022 & 0.337$\pm$0.012$^{\dagger}$ & 0.242$\pm$0.010$^{\dagger}$ & 0.385$\pm$0.017 & 0.457$\pm$0.016$^{\dagger}$ & 0.500$\pm$0.016$^{\dagger}$ & 0.540$\pm$0.017 & 0.467$\pm$0.014 \\
& Approx \cite{Qin2010} & 0.457$\pm$0.023 & 0.349$\pm$0.013 & 0.247$\pm$0.012 & 0.401$\pm$0.020 & 0.471$\pm$0.017 & 0.513$\pm$0.020 & 0.549$\pm$0.019 & 0.478$\pm$0.018 \\ \cline{2-10}
& FastSort \cite{Blondel2020} & 0.430$\pm$0.012 & 0.332$\pm$0.014$^{\dagger}$ & 0.239$\pm$0.011$^{\dagger}$ & 0.371$\pm$0.010 & 0.450$\pm$0.014$^{\dagger}$ & 0.496$\pm$0.016$^{\dagger}$ & 0.537$\pm$0.015$^{\dagger}$ & 0.461$\pm$0.012$^{\dagger}$ \\
& OT \cite{Cuturi2019} & 0.431$\pm$0.014 & 0.342$\pm$0.014$^{\dagger}$ & 0.245$\pm$0.011 & 0.382$\pm$0.012 & 0.461$\pm$0.016$^{\dagger}$ & 0.504$\pm$0.017 & 0.542$\pm$0.017 & 0.470$\pm$0.014 \\
& NeuralSort \cite{Grover2019} & 0.350$\pm$0.022$^{\dagger}$ & 0.293$\pm$0.014$^{\dagger}$ & 0.228$\pm$0.010$^{\dagger}$ & 0.304$\pm$0.016$^{\dagger}$ & 0.387$\pm$0.018$^{\dagger}$ & 0.448$\pm$0.017$^{\dagger}$ & 0.497$\pm$0.017$^{\dagger}$ & 0.402$\pm$0.014$^{\dagger}$ \\
& SoftSort \cite{prillo2020} & 0.411$\pm$0.016$^{\dagger}$ & 0.335$\pm$0.013$^{\dagger}$ & 0.245$\pm$0.010 & 0.360$\pm$0.016$^{\dagger}$ & 0.449$\pm$0.014$^{\dagger}$ & 0.497$\pm$0.015$^{\dagger}$ & 0.534$\pm$0.016$^{\dagger}$ & 0.460$\pm$0.013$^{\dagger}$ \\ \cline{2-10}
& SmoothI (ours) & \textbf{0.459$\pm$0.021} & \textbf{0.353$\pm$0.015} & \textbf{0.249$\pm$0.011} & \textbf{0.402$\pm$0.018} & \textbf{0.477$\pm$0.019} & \textbf{0.514$\pm$0.019} & \textbf{0.550$\pm$0.019} & \textbf{0.481$\pm$0.018} \\ \midrule \midrule
\multirow{10}{*}{\rotatebox[origin=c]{90}{\parbox[c]{1cm}{\centering Web30K}}} & ListNet \cite{Cao2007LRP} & 0.694$\pm$0.004$^{\dagger}$ & 0.649$\pm$0.003$^{\dagger}$ & 0.622$\pm$0.003$^{\dagger}$ & 0.496$\pm$0.004$^{\dagger}$ & 0.483$\pm$0.003$^{\dagger}$ & 0.495$\pm$0.003$^{\dagger}$ & 0.741$\pm$0.002$^{\dagger}$ & 0.593$\pm$0.002$^{\dagger}$ \\
& ListMLE \cite{Xia2008} & 0.620$\pm$0.085$^{\dagger}$ & 0.544$\pm$0.078$^{\dagger}$ & 0.498$\pm$0.072$^{\dagger}$ & 0.404$\pm$0.064$^{\dagger}$ & 0.383$\pm$0.061$^{\dagger}$ & 0.376$\pm$0.060$^{\dagger}$ & 0.646$\pm$0.033$^{\dagger}$ & 0.456$\pm$0.044$^{\dagger}$ \\
& ListAP \cite{Revaud2019} & 0.715$\pm$0.002$^{\dagger}$ & 0.658$\pm$0.001$^{\dagger}$ & 0.618$\pm$0.001$^{\dagger}$ & 0.503$\pm$0.002$^{\dagger}$ & 0.483$\pm$0.002$^{\dagger}$ & 0.486$\pm$0.001$^{\dagger}$ & 0.733$\pm$0.000$^{\dagger}$ & 0.578$\pm$0.001$^{\dagger}$ \\
& LambdaLoss \cite{Wang2018} & 0.697$\pm$0.022$^{\dagger}$ & 0.617$\pm$0.033$^{\dagger}$ & 0.565$\pm$0.040$^{\dagger}$ & 0.497$\pm$0.015$^{\dagger}$ & 0.466$\pm$0.023$^{\dagger}$ & 0.457$\pm$0.029$^{\dagger}$ & 0.691$\pm$0.020$^{\dagger}$ & 0.499$\pm$0.035$^{\dagger}$ \\
& Approx \cite{Qin2010} & 0.767$\pm$0.003$^{\dagger}$ & 0.716$\pm$0.002 & \textbf{0.675$\pm$0.002} & 0.544$\pm$0.003$^{\dagger}$ & 0.523$\pm$0.002$^{\dagger}$ & 0.527$\pm$0.001$^{\dagger}$ & 0.754$\pm$0.001 & \textbf{0.624$\pm$0.001} \\ \cline{2-10}
& FastSort \cite{Blondel2020} & 0.722$\pm$0.004$^{\dagger}$ & 0.660$\pm$0.003$^{\dagger}$ & 0.624$\pm$0.002$^{\dagger}$ & 0.525$\pm$0.002$^{\dagger}$ & 0.494$\pm$0.001$^{\dagger}$ & 0.498$\pm$0.002$^{\dagger}$ & 0.738$\pm$0.001$^{\dagger}$ & 0.585$\pm$0.002$^{\dagger}$ \\
& OT \cite{Cuturi2019} & 0.682$\pm$0.009$^{\dagger}$ & 0.637$\pm$0.003$^{\dagger}$ & 0.609$\pm$0.001$^{\dagger}$ & 0.459$\pm$0.007$^{\dagger}$ & 0.456$\pm$0.003$^{\dagger}$ & 0.469$\pm$0.002$^{\dagger}$ & 0.729$\pm$0.002$^{\dagger}$ & 0.584$\pm$0.001$^{\dagger}$ \\
& NeuralSort \cite{Grover2019} & 0.405$\pm$0.137$^{\dagger}$ & 0.347$\pm$0.128$^{\dagger}$ & 0.313$\pm$0.122$^{\dagger}$ & 0.286$\pm$0.096$^{\dagger}$ & 0.259$\pm$0.093$^{\dagger}$ & 0.253$\pm$0.095$^{\dagger}$ & 0.534$\pm$0.077$^{\dagger}$ & 0.289$\pm$0.110$^{\dagger}$ \\
& SoftSort \cite{prillo2020} & 0.724$\pm$0.002$^{\dagger}$ & 0.669$\pm$0.000$^{\dagger}$ & 0.635$\pm$0.000$^{\dagger}$ & 0.521$\pm$0.001$^{\dagger}$ & 0.500$\pm$0.001$^{\dagger}$ & 0.506$\pm$0.001$^{\dagger}$ & 0.747$\pm$0.001$^{\dagger}$ & 0.602$\pm$0.000$^{\dagger}$ \\ \cline{2-10}
& SmoothI (ours) & \textbf{0.776$\pm$0.002} & \textbf{0.717$\pm$0.001} & 0.674$\pm$0.001 & \textbf{0.552$\pm$0.002} & \textbf{0.530$\pm$0.002} & \textbf{0.532$\pm$0.001} & \textbf{0.754$\pm$0.001} & 0.616$\pm$0.002$^{\dagger}$ \\ \midrule \midrule
\multirow{10}{*}{\rotatebox[origin=c]{90}{\parbox[c]{1cm}{\centering YLTR}}} & ListNet \cite{Cao2007LRP} & 0.858$\pm$0.001$^{\dagger}$ & 0.814$\pm$0.001$^{\dagger}$ & 0.744$\pm$0.000$^{\dagger}$ & 0.726$\pm$0.000$^{\dagger}$ & 0.741$\pm$0.001$^{\dagger}$ & 0.777$\pm$0.000$^{\dagger}$ & 0.857$\pm$0.000$^{\dagger}$ & 0.846$\pm$0.001$^{\dagger}$ \\
& ListMLE \cite{Xia2008} & \textbf{0.874$\pm$0.001} & \textbf{0.829$\pm$0.001} & \textbf{0.755$\pm$0.000} & 0.724$\pm$0.002$^{\dagger}$ & 0.746$\pm$0.002 & \textbf{0.783$\pm$0.001} & \textbf{0.859$\pm$0.001} & \textbf{0.858$\pm$0.000} \\
& ListAP \cite{Revaud2019} & 0.820$\pm$0.005$^{\dagger}$ & 0.768$\pm$0.010$^{\dagger}$ & 0.694$\pm$0.012$^{\dagger}$ & 0.685$\pm$0.004$^{\dagger}$ & 0.686$\pm$0.010$^{\dagger}$ & 0.709$\pm$0.014$^{\dagger}$ & 0.820$\pm$0.007$^{\dagger}$ & 0.782$\pm$0.014$^{\dagger}$ \\
& LambdaLoss \cite{Wang2018} & 0.868$\pm$0.002$^{\dagger}$ & 0.822$\pm$0.002$^{\dagger}$ & 0.747$\pm$0.003$^{\dagger}$ & 0.731$\pm$0.002$^{\dagger}$ & 0.743$\pm$0.001$^{\dagger}$ & 0.775$\pm$0.002$^{\dagger}$ & 0.854$\pm$0.002$^{\dagger}$ & 0.845$\pm$0.001$^{\dagger}$ \\
& Approx \cite{Qin2010} & 0.870$\pm$0.001 & 0.828$\pm$0.000 & 0.754$\pm$0.000 & 0.731$\pm$0.001$^{\dagger}$ & 0.745$\pm$0.000$^{\dagger}$ & 0.779$\pm$0.000$^{\dagger}$ & 0.858$\pm$0.000 & 0.857$\pm$0.000$^{\dagger}$ \\ \cline{2-10}
& FastSort \cite{Blondel2020} & 0.857$\pm$0.001$^{\dagger}$ & 0.812$\pm$0.000$^{\dagger}$ & 0.741$\pm$0.000$^{\dagger}$ & 0.724$\pm$0.000$^{\dagger}$ & 0.729$\pm$0.000$^{\dagger}$ & 0.765$\pm$0.000$^{\dagger}$ & 0.851$\pm$0.000$^{\dagger}$ & 0.842$\pm$0.000$^{\dagger}$ \\
& OT \cite{Cuturi2019} & 0.846$\pm$0.011$^{\dagger}$ & 0.793$\pm$0.019$^{\dagger}$ & 0.719$\pm$0.021$^{\dagger}$ & 0.710$\pm$0.010$^{\dagger}$ & 0.719$\pm$0.018$^{\dagger}$ & 0.747$\pm$0.024$^{\dagger}$ & 0.842$\pm$0.012$^{\dagger}$ & 0.820$\pm$0.022$^{\dagger}$ \\
& NeuralSort \cite{Grover2019} & 0.000$\pm$0.000$^{\dagger}$ & 0.000$\pm$0.000$^{\dagger}$ & 0.000$\pm$0.000$^{\dagger}$ & 0.000$\pm$0.000$^{\dagger}$ & 0.000$\pm$0.000$^{\dagger}$ & 0.000$\pm$0.000$^{\dagger}$ & 0.499$\pm$0.000$^{\dagger}$ & 0.356$\pm$0.000$^{\dagger}$ \\
& SoftSort \cite{prillo2020} & 0.861$\pm$0.000$^{\dagger}$ & 0.814$\pm$0.000$^{\dagger}$ & 0.741$\pm$0.000$^{\dagger}$ & 0.729$\pm$0.000$^{\dagger}$ & 0.739$\pm$0.000$^{\dagger}$ & 0.771$\pm$0.000$^{\dagger}$ & 0.854$\pm$0.000$^{\dagger}$ & 0.841$\pm$0.000$^{\dagger}$ \\ \cline{2-10}
& SmoothI (ours) & 0.869$\pm$0.001$^{\dagger}$ & 0.826$\pm$0.000$^{\dagger}$ & 0.750$\pm$0.000$^{\dagger}$ & \textbf{0.735$\pm$0.001} & \textbf{0.748$\pm$0.000} & 0.780$\pm$0.000$^{\dagger}$ & 0.858$\pm$0.000 & 0.850$\pm$0.001$^{\dagger}$ \\
\bottomrule
\end{tabular}
}
\end{table}

\subsection{Impact of the Optimized IR Metric (RQ2)}
\label{sec:metric-impact}

\begin{figure}[t] 
\includegraphics[width=1\textwidth]{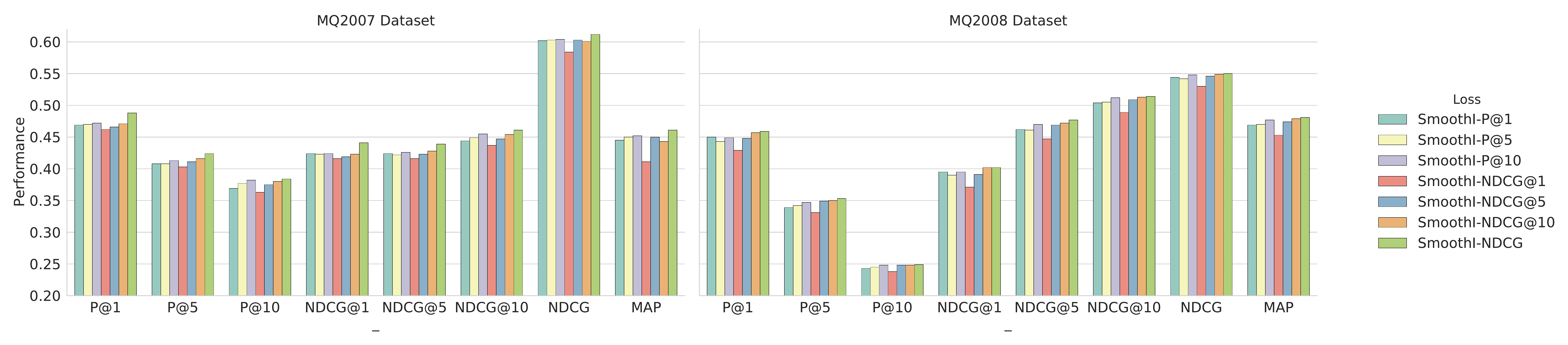}
\caption{Learning-to-rank retrieval results of SmoothI variants (optimizing different IR metrics) on MQ2007 and MQ2008.}
\label{fig:smoothi_res}
\end{figure}

In the previous section, we observed that the SmoothI variants overall led to very competitive results compared to the considered baselines. We now turn to investigating (RQ2) by studying how SmoothI variants perform individually. For this purpose, we plot in Fig.~\ref{fig:smoothi_res} the learning-to-rank results on MQ2007 and MQ2008 for the most important SmoothI variants\footnote{We omit MAP here as this measure tends to be less used in IR \cite{Fuhr2017}.}~-- respectively optimizing P@$\{1,5,10\}$ or NDCG@$\{1,5,10,N\}$~-- to show their performance with respect to every metric. Strikingly, we observe that the variant SmoothI-NDCG, which optimizes NDCG, yields the best performance, no matter which evaluation measure is considered~-- precision-based or NDCG-based, and at any cutoff. Having such consistency is particularly appealing because it means that one can in general simply use SmoothI-NDCG in order to get all around good performance according to any IR metric. Looking at the other results, we can notice that metrics with higher cutoff (\textit{e.g.}, @10 and @N) perform the best. This could be explained by the fact that a higher cutoff leads to a wider coverage of the set of scores output by the neural model and, consequently, better gradient updates. This also confirms the ability of SmoothI to properly order the documents associated to a query, even for higher ranks.

\subsection{Efficiency (RQ3)}
\label{sec:efficiency}

\begin{table}[h]
\centering
\caption{Training runtime with different learning-to-rank losses for one epoch (in seconds).}
\label{tab:runtime}
\scalebox{0.95}{
\begin{tabular}{@{}lrrrr@{}}
\toprule
 & MQ2007 & MQ2008 & Web30K & YLTR \\ \midrule
ListNET & 1.20 & 0.56 & 106.45 & 33.98 \\
ListMLE & 12.94 & 12.50 & 45.13 & 474.29 \\
ListAP & 1.31 & 0.72 & 108.23 & 34.51 \\
LambdaLoss-P@1 & 13.80 & 13.08 & 80.01 & 850.40 \\
LambdaLoss-P@10 & 13.89 & 13.17 & 80.14 & 810.83 \\
LambdaLoss-NDCG@1 & 14.56 & 13.76 & 87.17 & 614.98 \\
LambdaLoss-NDCG@10 & 14.25 & 13.42 & 88.22 & 877.28 \\
LambdaLoss-NDCG & 14.20 & 13.24 & 87.84 & 612.50 \\
Approx-P@1 & 1.21 & 0.67 & 125.66 & 38.58 \\
Approx-P@10 & 1.28 & 0.68 & 174.33 & 41.24 \\
Approx-NDCG@1 & 1.33 & 0.64 & 117.10 & 38.08 \\
Approx-NDCG@10 & 1.26 & 0.72 & 164.02 & 38.79 \\
Approx-NDCG & 1.25 & 0.65 & 144.17 & 40.26 \\
FastSort-NDCG & 3.51 & 1.62 & 151.18 & 44.20 \\
OT-NDCG & 2.97 & 1.56 & 1990.37 & 545.23 \\
NeuralSort-NDCG & 1.92 & 0.75 & 140.52 & 32.09 \\
SoftSort-NDCG & 1.39 & 0.73 & 118.59 & 31.78 \\
SmoothI-P@1 & 1.27 & 0.65 & 131.58 & 40.72 \\
SmoothI-P@10 & 1.28 & 0.72 & 138.41 & 40.32 \\
SmoothI-NDCG@1 & 1.19 & 0.69 & 139.54 & 41.82 \\
SmoothI-NDCG@10 & 1.27 & 0.70 & 114.52 & 38.81 \\
SmoothI-NDCG & 2.56 & 1.12 & 224.70 & 57.94 \\ \bottomrule
\end{tabular}
}
\end{table}

To address (RQ3), we display in Table \ref{tab:runtime} the runtime (in seconds) during training for one epoch with the different approaches and their variants. We observe that on MQ2007 and MQ2008 the runtime of most approaches is similar, with the exception of ListMLE and LambdaLoss, which are slower. On Web30k, the order of magnitude of most runtimes is comparable, with nonetheless the OT approach taking much more time and SmoothI-NDCG being slightly slower. On YLTR, ListMLE, LambdaLoss and OT are significantly less efficient. Overall, all approaches other than ListMLE, LambdaLoss and OT seem to be reasonably scalable.

Among the SmoothI variants, we note that SmoothI-NDCG is slightly slower in general. Although we previously found that SmoothI-NDCG was the approach that gave the best performance (see Section~\ref{sec:metric-impact}), a good trade-off between efficiency and effectiveness could be achieved by considering NDCG at a higher cutoff (\textit{e.g.}, $20$ or $50$) instead of relying on NDCG@$N$.

\subsection{Experiments on Text-based IR (RQ4)}
\label{sec:adhoc-ir}

To further validate the efficacy of our proposed approach SmoothI and investigate (RQ4), we conducted experiments on text-based information retrieval, \textit{i.e.}, with raw texts as input. In particular, the task consists here in optimizing a given neural model to appropriately rank the documents for each query, where the documents and queries are raw texts. This differs from the previous sections which focus on feature-based learning to rank, \textit{i.e.}, where each query-document pair is represented by a feature vector. 

\paragraph{Experimental setup.}
\textbf{TREC Robust04} is used here as the text-based IR collection, which consists of 250 queries and 0.5M documents. We use the keyword version of queries, corresponding to the title fields of TREC topics \cite{dai2019deeper,mcdonald2018deep}. We experimented with \textbf{vanilla BERT} \cite{devlin-etal-2019-bert} as the neural ranking model, as it is the core of recent state-of-the-art IR methods \cite{dai2019deeper,macavaney2019cedr,li2020parade}.
To the best of our knowledge, most text-based IR neural models are trained with a pointwise or pairwise loss \cite{li2020parade,macavaney2019cedr}. A challenge in this experiment was then to use a listwise loss on a BERT model. Indeed, the calculation of the loss requires that the representations of all the documents to be ranked for a query hold together in memory. Given that the whole list of documents associated to a query can be large and lead to a prohibitive memory cost from the BERT model, we adopted a simple alternative. We compute the listwise loss only on the documents of the training batch, where each batch contains two  pairs of (relevant, non-relevant) documents associated to one query.

We use the pretrained uncased BERT-base as our BERT ranking model and compare the pairwise hinge loss \cite{macavaney2019cedr} against the proposed SmoothI-based NDCG loss, which proved effective in learning-to-rank experiments (see Section \ref{sec:metric-impact}). We concatenate the [CLS] token, query tokens, the [SEP] token and document tokens (from one document) as BERT's input tokens. From BERT's output [CLS] vector, a dense layer generates the relevance score for the corresponding query-document pair. Following previous works \cite{macavaney2019cedr}, to handle documents longer than the capacity of BERT, documents are truncated to 800 tokens. The models for both the pairwise and SmoothI losses are trained for 100 epochs using the Adam optimizer with a learning rate of $2 \cdot 10^{-5}$ for BERT and $10^{-3}$ for the top dense layer. Batch size is 4 (two pairs, to fit on a single GPU) and gradient accumulation (every 8 steps) is used. We followed a five-fold cross validation protocol \cite{macavaney2019cedr}. The models are trained on the training set (corresponding to three folds), tuned on the validation set (one fold) with early stopping, and evaluated on the test set (the remaining fold). We use the standard re-ranking setting and re-rank the top-150 documents returned by BM25 \cite{robertson1994some}. The hyperparameters $\alpha$ and $\delta$ for SmoothI are set to 1.0 and 0.1 respectively.

\paragraph{Results.}

\begin{table}[h]
\caption{Text-based retrieval results on Robust04. Mean test performance $\pm$ standard error is calculated over 5 folds. The best results are in bold and ``$\,{\dagger}\,$'' indicates a model significantly worse than the best one according to a paired t-test at 5\%.}
\centering

\begin{tabular}{@{}llllll@{}}
\toprule
& P@1 & P@5 & P@10 & P@20 & MAP  \\ \midrule
BERT (pairwise loss) & 0.625$\pm$0.035 & 0.533$\pm$0.025 & 0.466$\pm$0.018 & 0.384$\pm$0.011$^{\dagger}$ & 0.232$\pm$0.003$^{\dagger}$  \\
BERT (SmoothI loss) & \textbf{0.643$\pm$0.027} & \textbf{0.550$\pm$0.030} & \textbf{0.486$\pm$0.022} & \textbf{0.410$\pm$0.018} & \textbf{0.241$\pm$0.006}  \\ \midrule
& NDCG@1 & NDCG@5 & NDCG@10 & NDCG@20 & NDCG  \\ \midrule
BERT (pairwise loss) & 0.581$\pm$0.032 & 0.524$\pm$0.022 & 0.489$\pm$0.016$^{\dagger}$ & 0.457$\pm$0.012$^{\dagger}$ & 0.444$\pm$0.008  \\
BERT (SmoothI loss) & \textbf{0.598$\pm$0.025} & \textbf{0.535$\pm$0.023} & \textbf{0.504$\pm$0.019} & \textbf{0.475$\pm$0.018} & \textbf{0.447$\pm$0.007} \\ \bottomrule
\end{tabular}


\label{tab:textir}
\end{table}

Table~\ref{tab:textir} reports the text-based retrieval performance, averaged over 5 folds, of the vanilla BERT model with both pairwise hinge and SmoothI-based NDCG losses. The best results are in bold and ``$\,{\dagger}\,$'' indicates a model significantly worse than the best one according to a paired t-test at 5\%. One can observe that the BERT model performs better when it is trained with the SmoothI loss. The improvement over the pairwise loss is in particular significant on P@20, MAP, NDCG@10 and NDCG@20. To be specific, the vanilla BERT model with SmoothI achieves 0.410 on P@20 and 0.475 on NDCG@20, which are the best results this model has achieved to our knowledge \cite{macavaney2019cedr,ma2020prop}. 

\section{Related Work}
\label{sec:relwork}

\textit{The methods discussed in this section and used as baselines in our experiments are presented in bold.}

Listwise approaches are widely used in IR as they directly address the ranking problem \cite{Cao2007LRP,Xia2008}. A first category of methods developed for listwise learning to rank aimed at building surrogates for non-differentiable loss functions based on a ranking of the objects. In this line, RankCosine \cite{Qin2008} used a loss function based on the cosine of two rank vectors while \textbf{ListNet} \cite{Cao2007LRP} adopted a cross-entropy loss. \textbf{ListMLE} and its extensions \cite{Lan2014,Xia2008} introduced a likelihood loss and a theoretical framework for statistical consistency (extended in \cite{Lan2009,Lan2012,Xia2009}), while \cite{Kar2015} and \cite{BruchCorr2019,Ravikumar2011,Valizadegan2009} studied surrogate loss functions for P@$K$ and NDCG, respectively. Lastly, LambdaRank \cite{Burges2007} used a logistic loss weighted by the cost, according to the targeted evaluation metric, of swapping two documents. This approach has then been extended to tree-based ensemble methods in LambdaMART \cite{Burges2011}, and finally generalized in \textbf{LambdaLoss} \cite{Wang2018}, the best performing method according to \cite{Wang2018} in this family. 

If surrogate losses are interesting as they can lead to simpler optimization problems, they are sometimes only loosely related to the target loss, as pointed out in \cite{Bruch2019}. A typical example is the Top-$K$ loss proposed in \cite{Berrada2018} (see also \cite{Chen2009,Xu2008} for a study of the relations between evaluation metrics and surrogate losses). Furthermore, using a notion of consistency based on the concept of calibration developed in \cite{steinwart2007}, Calauzènes et al. \cite{calauzenes2012,calauzenes2020} have shown that convex and consistent surrogate ranking losses do not always exist, as for example for the mean average precision or the expected reciprocal rank. Researchers have thus directly studied differentiable approximations of loss functions and evaluation metrics~--~from SoftRank \cite{Taylor2008}, which proposed a smooth approximation to NDCG, to the recent differentiable approximation of MAP, called \textbf{ListAP}, in the context of image retrieval \cite{Revaud2019}. Some of the proposed approaches are based on a soft approximation of the position function \cite{Wu2009} or of the rank indicator \cite{Chapelle2010}, from which one can derive differentiable approximations of most standard IR metrics. However, \cite{Wu2009} is specific to DCG whereas \cite{Chapelle2010} assumes that the inverse of the rank function is known. Closer to our proposal is the work of \cite{Qin2010} (referred to as \textbf{Approx} in our experiments) which was recently used in \cite{Bruch2019} and which makes use of the composition of two approximation functions, namely the position and the truncation functions, to obtain theoretically sound differentiable approximations of P@K, MAP, P@K and NDCG@K. In contrast, our approach makes use of a single approximation, that of the rank indicator, for all losses and metrics considered, and thus avoids in general composing the errors of different approximations.\footnote{Note however that as MAP is based on a composition of rank indicators, the errors of each approximation also compose.} 

More recently, different studies, mostly in the machine learning community, have been dedicated to differentiable approximations of the sorting and the rank indicators. A fundamental relation between optimal transport and generalized sorting is for example provided in \cite{Cuturi2019}, with an approximation based on Sinkhorn quantiles. This is the approach referred to as \textbf{OT} in our experiments (note that \cite{Yu2019} also exploits optimal transport for listwise document ranking, without however proving that the approximation used is correct). \cite{Blondel2020} have focused on devising fast approximations of the sorting and ranking functions by casting differentiable sorting and ranking as projections onto the the convex hull of all permutations, an approach referred to as \textbf{FastSort} in our experiments. Closer to our proposal~-- as they are also considering rank indicators~-- are the studies presented in \cite{Plotz2018,Grover2019,prillo2020}, mostly for $K$-NN classification. \cite{Plotz2018} propose a recursive formulation of an approximation of the rank indicator that bears similarities with ours. However, no theoretical guarantees  are provided, neither for this approximation nor for the $K$-NN loss it is used in. A more general framework, based on unimodal row-stochastic matrices, is used in \cite{Grover2019}, in which an approximation of the sorting operator, referred to as \textbf{NeuralSort} in our experiments, is introduced. It can be shown that the $N\times N$ matrix $\mathbf{I}^\alpha = \{I^{r,\alpha}_j\}_{1 \leq r \leq N, 1 \leq j \leq N}$ is a unimodal row-stochastic matrix, so that our proposal can be used in their framework as well. \cite{prillo2020} further improved the above proposal by simplifying it, an approach referred to as \textbf{SoftSort}. Lastly, we want to mention the approach developed by \cite{Kong2020} who propose an adaptive projection method, called Rankmax, that projects, in a differentiable manner, a score vector onto the ($n,k$)-simplex. This method is particularly well adapted to multi-class classification. Its application to IR mettrics remains however to be studied.

\section{Conclusion}
\label{sec:conclusion}

We presented in this study a unified approach to build differentiable approximations of IR metrics (P@$K$, MAP and NDCG@$K$) on the basis of an approximation of the rank indicator function. We further showed that the errors associated with these approximations decrease exponentially with an inverse temperature-like hyperparameter that controls the quality of the approximations. We also illustrated the efficacy and efficiency of our approach on four standard collections based on learning-to-rank features, as well as on the popular TREC Robust04 text-based collection. All in all, our proposal, referred to as \textit{SmoothI}, constitutes an additional tool for differentiable ranking that proved highly competitive compared with previous approaches on several collections, either based on learning to rank or textual features.

We also want to stress that the approach we proposed is nevertheless more general and can directly be applied to other losses, such as the $K$-NN loss studied in \cite{Grover2019}, and functions that are directly based on the rank indicator. Among such functions, we are particularly interested in the ranking function, which aims at ordering the documents in decreasing order of their scores, the sorting function, which aims at ordering the scores, and the position function, which aims at providing, for each document, its rank in the ordered list of scores. We plan to study, on the basis of the development given in this paper, differentiable approximations of these functions in a near future.

\bibliographystyle{plain}
\bibliography{SmoothI}

\begin{thebibliography}{10}

\bibitem{Berrada2018}
Leonard Berrada, Andrew Zisserman, and M.~Pawan Kumar.
\newblock Smooth loss functions for deep top-k classification.
\newblock In {\em Proceedings of the 6th International Conference on Learning
  Representations}, 2018.

\bibitem{Blondel2020}
Mathieu Blondel, Olivier Teboul, Quentin Berthet, and Josip Djolonga.
\newblock {Fast Differentiable Sorting and Ranking}.
\newblock In {\em Proceedings of the 37th International Conference on Machine
  Learning}, pages 950--959, 2020.

\bibitem{BruchCorr2019}
Sebastian Bruch.
\newblock An alternative cross entropy loss for learning-to-rank.
\newblock {\em arXiv:1911.09798}, 2019.

\bibitem{Bruch2019}
Sebastian Bruch, Masrour Zoghi, Michael Bendersky, and Marc Najork.
\newblock Revisiting approximate metric optimization in the age of deep neural
  networks.
\newblock In {\em Proceedings of the 42nd International {ACM} {SIGIR}
  Conference on Research and Development in Information Retrieval}, 2019.

\bibitem{Burges2007}
Christopher J.~C. Burges, Robert Ragno, and Quoc~Viet Le.
\newblock Learning to rank with nonsmooth cost functions.
\newblock In {\em Proceedings of the 21st Annual Conference on Neural
  Information Processing Systems}, 2007.

\bibitem{Burges2011}
Christopher J.~C. Burges, Krysta~M. Svore, Paul~N. Bennett, Andrzej Pastusiak,
  and Qiang Wu.
\newblock Learning to rank using an ensemble of lambda-gradient models.
\newblock {\em Journal of Machine Learning Research}, 2011.

\bibitem{calauzenes2020}
Cl{\'{e}}ment Calauz{\`{e}}nes and Nicolas Usunier.
\newblock On ranking via sorting by estimated expected utility.
\newblock In {\em Proceedings of the 34th Annual Conference on Neural
  Information Processing Systems}, 2020.

\bibitem{calauzenes2012}
Cl{\'{e}}ment Calauz{\`{e}}nes, Nicolas Usunier, and Patrick Gallinari.
\newblock On the (non-)existence of convex, calibrated surrogate losses for
  ranking.
\newblock In {\em Proceedings of the 26th Annual Conference on Neural
  Information Processing Systems}, pages 197--205, 2012.

\bibitem{Cao2007LRP}
Zhe Cao, Tao Qin, Tie-Yan Liu, Ming-Feng Tsai, and Hang Li.
\newblock Learning to rank: From pairwise approach to listwise approach.
\newblock In {\em Proceedings of the 24th International Conference on Machine
  Learning}, pages 129--136, 2007.

\bibitem{Chapelle2010-yltr}
Olivier Chapelle and Yi~Chang.
\newblock Yahoo! learning to rank challenge overview.
\newblock In {\em Proceedings of the 2010 International Conference on Yahoo!
  Learning to Rank Challenge}, page 1–24, 2010.

\bibitem{Chapelle2010}
Olivier Chapelle and Mingrui Wu.
\newblock Gradient descent optimization of smoothed information retrieval
  metrics.
\newblock {\em Information Retrieval}, 13(3):216–235, 2010.

\bibitem{Chen2009}
Wei Chen, Tie{-}Yan Liu, Yanyan Lan, Zhiming Ma, and Hang Li.
\newblock Ranking measures and loss functions in learning to rank.
\newblock In {\em Proceedings of the 23rd Annual Conference on Neural
  Information Processing Systems}, pages 315--323, 2009.

\bibitem{Cuturi2019}
Marco Cuturi, Olivier Teboul, and Jean-Philippe Vert.
\newblock Differentiable ranking and sorting using optimal transport.
\newblock In {\em Proceedings of the 33rd Annual Conference on Neural
  Information Processing Systems}, 2019.

\bibitem{dai2019deeper}
Zhuyun Dai and Jamie Callan.
\newblock Deeper text understanding for ir with contextual neural language
  modeling.
\newblock In {\em Proceedings of the 42nd International ACM SIGIR Conference on
  Research and Development in Information Retrieval}, pages 985--988, 2019.

\bibitem{devlin-etal-2019-bert}
Jacob Devlin, Ming-Wei Chang, Kenton Lee, and Kristina Toutanova.
\newblock {BERT}: Pre-training of deep bidirectional transformers for language
  understanding.
\newblock In {\em Proceedings of the 2019 Conference of the North {A}merican
  Chapter of the Association for Computational Linguistics)}, pages 4171--4186,
  2019.

\bibitem{Fuhr2017}
Norbert Fuhr.
\newblock Some common mistakes in ir evaluation, and how they can be avoided.
\newblock {\em SIGIR Forum, 51(3)}, 2017.

\bibitem{Grover2019}
Aditya Grover, Eric Wang, Aaron Zweig, and Stefano Ermon.
\newblock {Stochastic Optimization of Sorting Networks via Continuous
  Relaxations}.
\newblock In {\em Proceedings of the 7th International Conference on Learning
  Representations}, 2019.

\bibitem{Jang2017}
Eric Jang, Shixiang Gu, and Ben Poole.
\newblock Categorical reparameterization with gumbel-softmax.
\newblock In {\em Proceedings of the 5th International Conference on Learning
  Representations}, 2017.

\bibitem{Jarvelin2002}
Kalervo J{\"{a}}rvelin and Jaana Kek{\"{a}}l{\"{a}}inen.
\newblock Cumulated gain-based evaluation of {IR} techniques.
\newblock {\em ACM Transactions on Information Systems}, 20(4), 2002.

\bibitem{Kar2015}
Purushottam Kar, Harikrishna Narasimhan, and Prateek Jain.
\newblock Surrogate functions for maximizing precision at the top.
\newblock In {\em Proceedings of the 32nd International Conference on Machine
  Learning}, pages 189--198, 2015.

\bibitem{Kingma2014-adam}
Diederik~P. Kingma and Jimmy Ba.
\newblock Adam: A method for stochastic optimization.
\newblock In {\em Proceedings of the 3rd International Conference on Learning
  Representations}, 2015.

\bibitem{Kong2020}
Weiwei Kong, Walid Krichene, Nicolas Mayoraz, Steffen Rendle, and Li~Zhang.
\newblock Rankmax: An adaptive projection alternative to the softmax function.
\newblock In {\em Proceedings of the 34th Annual Conference on Neural
  Information Processing Systems}, 2020.

\bibitem{Lan2012}
Yanyan Lan, Jiafeng Guo, Xueqi Cheng, and Tie{-}Yan Liu.
\newblock Statistical consistency of ranking methods in a rank-differentiable
  probability space.
\newblock In {\em Proceedings of the 26th Annual Conference on Neural
  Information Processing Systems}, pages 1241--1249, 2012.

\bibitem{Lan2009}
Yanyan Lan, Tie{-}Yan Liu, Zhiming Ma, and Hang Li.
\newblock Generalization analysis of listwise learning-to-rank algorithms.
\newblock In {\em Proceedings of the 26th Annual International Conference on
  Machine Learning}, pages 577--584, 2009.

\bibitem{Lan2014}
Yanyan Lan, Yadong Zhu, Jiafeng Guo, Shuzi Niu, and Xueqi Cheng.
\newblock Position-aware {ListMLE}: A sequential learning process for ranking.
\newblock In {\em Proceedings of the 30th Conference on Uncertainty in
  Artificial Intelligence}, pages 449--458, 2014.

\bibitem{li2020parade}
Canjia Li, Andrew Yates, Sean MacAvaney, Ben He, and Yingfei Sun.
\newblock Parade: Passage representation aggregation for document reranking.
\newblock {\em arXiv:2008.09093}, 2020.

\bibitem{Liu2011}
Tie-Yan Liu.
\newblock {\em Learning to rank for information retrieval}.
\newblock Springer, 2011.

\bibitem{ma2020prop}
Xinyu Ma, Jiafeng Guo, Ruqing Zhang, Yixing Fan, Xiang Ji, and Xueqi Cheng.
\newblock Prop: Pre-training with representative words prediction for ad-hoc
  retrieval.
\newblock {\em arXiv:2010.10137}, 2020.

\bibitem{macavaney2019cedr}
Sean MacAvaney, Andrew Yates, Arman Cohan, and Nazli Goharian.
\newblock Cedr: Contextualized embeddings for document ranking.
\newblock In {\em Proceedings of the 42nd International ACM SIGIR Conference on
  Research and Development in Information Retrieval}, pages 1101--1104, 2019.

\bibitem{Maddison2017}
Chris~J. Maddison, Andriy Mnih, and Yee~Whye Teh.
\newblock The concrete distribution: A continuous relaxation of discrete random
  variables.
\newblock In {\em Proceedings of the 5th International Conference on Learning
  Representations}, 2017.

\bibitem{mcdonald2018deep}
Ryan McDonald, George Brokos, and Ion Androutsopoulos.
\newblock Deep relevance ranking using enhanced document-query interactions.
\newblock In {\em Proceedings of the 2018 Conference on Empirical Methods in
  Natural Language Processing}, pages 1849--1860, 2018.

\bibitem{MoradiFard2018}
Maziar {Moradi Fard}, Thibaut Thonet, and Eric Gaussier.
\newblock Deep k-means: Jointly clustering with k-means and learning
  representations.
\newblock {\em arXiv:1806.10069}, 2018.

\bibitem{tensorflowranking}
Rama~Kumar Pasumarthi, Sebastian Bruch, Xuanhui Wang, Cheng Li, Michael
  Bendersky, Marc Najork, Jan Pfeifer, Nadav Golbandi, Rohan Anil, and Stephan
  Wolf.
\newblock Tf-ranking: Scalable tensorflow library for learning-to-rank.
\newblock In {\em Proceedings of the 25th ACM SIGKDD International Conference
  on Knowledge Discovery \& Data Mining}, page 2970–2978, 2019.

\bibitem{pytorch}
Adam Paszke, Sam Gross, Francisco Massa, Adam Lerer, James Bradbury, Gregory
  Chanan, Trevor Killeen, Zeming Lin, Natalia Gimelshein, Luca Antiga, Alban
  Desmaison, Andreas Kopf, Edward Yang, Zachary DeVito, Martin Raison, Alykhan
  Tejani, Sasank Chilamkurthy, Benoit Steiner, Lu~Fang, Junjie Bai, and Soumith
  Chintala.
\newblock Pytorch: An imperative style, high-performance deep learning library.
\newblock In {\em Proceedings of the 33rd Annual Conference on Neural
  Information Processing Systems}, pages 8024--8035. 2019.

\bibitem{Plotz2018}
Tobias Pl{\"o}tz and Stefan Roth.
\newblock Neural nearest neighbors networks.
\newblock In {\em Proceedings of the 32nd Conference on Neural Information
  Processing Systems}, 2018.

\bibitem{prillo2020}
Sebastian Prillo and Julian Eisenschlos.
\newblock Softsort: {A} continuous relaxation for the argsort operator.
\newblock In {\em Proceedings of the 37th International Conference on Machine
  Learning}, pages 7793--7802, 2020.

\bibitem{Qin2013letor4}
Tao Qin and Tie{-}Yan Liu.
\newblock Introducing {LETOR} 4.0 datasets.
\newblock {\em arXiv:1306.2597}, 2013.

\bibitem{Qin2010}
Tao Qin, Tie-Yan Liu, and Hang Li.
\newblock {A General Approximation Framework for Direct Optimization of
  Information Retrieval Measures}.
\newblock {\em Information Retrieval, 13(4)}, 2010.

\bibitem{Qin2008}
Tao Qin, Xu{-}Dong Zhang, Ming{-}Feng Tsai, De{-}Sheng Wang, Tie{-}Yan Liu, and
  Hang Li.
\newblock Query-level loss functions for information retrieval.
\newblock {\em Information Processing \& Management}, 44(2):838--855, 2008.

\bibitem{Ravikumar2011}
Pradeep Ravikumar, Ambuj Tewari, and Eunho Yang.
\newblock On {NDCG} consistency of listwise ranking methods.
\newblock In {\em Proceedings of the 14th International Conference on
  Artificial Intelligence and Statistics}, pages 618--626, 2011.

\bibitem{Revaud2019}
J{\'{e}}r{\^{o}}me Revaud, Jon Almaz{\'{a}}n, Rafael~S. Rezende, and
  C{\'{e}}sar~Roberto de~Souza.
\newblock Learning with average precision: Training image retrieval with a
  listwise loss.
\newblock In {\em Proceedings of the 2019 {IEEE/CVF} International Conference
  on Computer Vision}, pages 5106--5115, 2019.

\bibitem{robertson1994some}
Stephen~E Robertson and Steve Walker.
\newblock Some simple effective approximations to the 2-poisson model for
  probabilistic weighted retrieval.
\newblock In {\em Proceedings of the 17th Annual International ACM SIGIR
  Conference on Research and Development in Information Retrieval}, pages
  232--241, 1994.

\bibitem{Rose1990}
Kenneth Rose, Eitan Gurewitz, and Geoffrey Fox.
\newblock A deterministic annealing approach to clustering.
\newblock {\em Pattern Recognition Letters}, 11(9):589--594, 1990.

\bibitem{steinwart2007}
Ian Steinwart.
\newblock How to compare different loss functions and their risks.
\newblock {\em Constructive Approximation}, 26(2):225--287, 2007.

\bibitem{Taylor2008}
Michael Taylor, John Guiver, Stephen Robertson, and Tom Minka.
\newblock {SoftRank}: Optimizing non-smooth rank metrics.
\newblock In {\em Proceedings of the 1st International Conference on Web Search
  and Data Mining}, page 77–86, 2008.

\bibitem{Valizadegan2009}
Hamed Valizadegan, Rong Jin, Ruofei Zhang, and Jianchang Mao.
\newblock Learning to rank by optimizing {NDCG} measure.
\newblock In {\em Proceedings of the 23rd Annual Conference on Neural
  Information Processing Systems}, 2009.

\bibitem{VandenOord2017}
Aaron van~den Oord, Oriol Vinyals, and Koray Kavukcuoglu.
\newblock Neural discrete representation learning.
\newblock In {\em Proceedings of the 31st Annual Conference on Neural
  Information Processing Systems}, pages 6309--6318, 2017.

\bibitem{treceval}
Christophe Van~Gysel and Maarten de~Rijke.
\newblock Pytrec\_eval: An extremely fast python interface to trec\_eval.
\newblock In {\em Proceedings of the 41st International ACM SIGIR conference on
  Research and Development in Information Retrieval}, 2018.

\bibitem{Wang2018}
Xuanhui Wang, Cheng Li, Nadav Golbandi, Michael Bendersky, and Marc Najork.
\newblock The {LambdaLoss} framework for ranking metric optimization.
\newblock In {\em Proceedings of the 27th {ACM} International Conference on
  Information and Knowledge Management}, 2018.

\bibitem{Wu2009}
Mingrui Wu, Yi~Chang, Zhaohui Zheng, and Hongyuan Zha.
\newblock Smoothing {DCG} for learning to rank: A novel approach using smoothed
  hinge functions.
\newblock In {\em Proceedings of the 18th ACM Conference on Information and
  Knowledge Management}, page 1923–1926, 2009.

\bibitem{Xia2009}
Fen Xia, Tie{-}Yan Liu, and Hang Li.
\newblock Statistical consistency of top-k ranking.
\newblock In {\em Proceedings of the 23rd Annual Conference on Neural
  Information Processing Systems}, pages 2098--2106, 2009.

\bibitem{Xia2008}
Fen Xia, Tie{-}Yan Liu, Jue Wang, Wensheng Zhang, and Hang Li.
\newblock Listwise approach to learning to rank: theory and algorithm.
\newblock In {\em Proceedings of the 25th International Conference on Machine
  Learning}, 2008.

\bibitem{Xu2008}
Jun Xu, Tie{-}Yan Liu, Min Lu, Hang Li, and Wei{-}Ying Ma.
\newblock Directly optimizing evaluation measures in learning to rank.
\newblock In {\em Proceedings of the 31st Annual International {ACM} {SIGIR}
  Conference on Research and Development in Information Retrieval}, pages
  107--114, 2008.

\bibitem{Yu2019}
Haitao Yu, Adam Jatowt, Hideo Joho, Joemon~M. Jose, Xiao Yang, and Long Chen.
\newblock {WassRank}: Listwise document ranking using optimal transport theory.
\newblock In {\em Proceedings of the 12th {ACM} International Conference on Web
  Search and Data Mining}, pages 24--32, 2019.

\end{thebibliography}

\end{document}